\documentclass[a4paper,UKenglish,cleveref,autoref]{lipics-v2019}

\bibliographystyle{plainurl}

\title{Fast algorithms for handling diagonal constraints in timed automata}

\author{Paul Gastin}{LSV, CNRS, ENS Paris-Saclay, Universit\'e Paris--Saclay,
France}{paul.gastin@lsv.fr}{}{Supported by ANR project TickTac (ANR-18-CE40-0015)}

\author{Sayan Mukherjee}{Chennai Mathematical Institute, India}{sayanm@cmi.ac.in}{}{Infosys foundation (India) and Tata Consultancy Services - Innovation Labs (Pune, India)}

\author{B Srivathsan}{Chennai Mathematical Institute, India}{sri@cmi.ac.in}{}{Infosys foundation (India) and Tata Consultancy Services - Innovation Labs (Pune, India)}

\authorrunning{P. Gastin, S. Mukherjee, and B. Srivathsan}

\Copyright{Paul Gastin, Sayan Mukherjee, B Srivathsan}

\ccsdesc{Theory of Computation~Models of computation}
\ccsdesc{Models of computation~Timed and hybrid systems}

\keywords{Timed Automata, Reachability, Zones, Simulations,
  Diagonal constraints}

\funding{Supported by UMI ReLaX.}

\category{}

\relatedversion{}

\supplement{}

\nolinenumbers

\hideLIPIcs

\usepackage{enumitem}
\usepackage{amssymb}
\usepackage{amsmath}
\usepackage{amsfonts}
\usepackage{multirow}
\usepackage{wrapfig}
\usepackage{array}

\usepackage[textsize=small]{todonotes}

\newlength \reqlen
\setlength{\reqlen}{\parindent}

\usepackage{tikz}
\usetikzlibrary{calc,decorations.pathmorphing}

% package for algorithm
\usepackage[linesnumbered,vlined, boxed]{algorithm2e}
 %\SetAlFnt{\footnotesize}

% Listings package
\usepackage{listings, multicol,mathtools,tocloft}

\lstdefinelanguage{algo}{%
  morekeywords={function,algorithm,push,pop,top,for,all,and,or,if,then,else,repeat,until,while,do,report,return,such,that,each,add,call,exit,let,Input,
  Output, procedure}
}

\lstset{%
  basicstyle=\footnotesize,
  mathescape=true,
  numbers=left,
  numberstyle=\tiny,
  numbersep=5pt,
  language=algo,
  frame=single
}

% Calligraphic letters
\newcommand{\Aa}{\mathcal{A}}
\newcommand{\Oo}{\mathcal{O}}

\newcommand{\Gg}{\mathcal{G}}

% Small Greek symbols
\renewcommand{\d}{\delta}

% Capital Greek symbols

% Operators
\newcommand{\true}{\operatorname{true}}
\newcommand{\false}{\operatorname{false}}
\newcommand{\Land}{\bigwedge}

% Relations
\newcommand{\incl}{\subseteq}

% Number systems
\newcommand{\Rpos}{\mathbb{R}_{\ge 0}}
\newcommand{\Nat}{\mathbb{N}}
\newcommand{\Int}{\mathbb{Z}}

% Complexity classes

\newcommand{\PSPACE}{\operatorname{PSPACE}}

% New environments
%\theoremstyle{plain}

%\newtheorem{proposition}[theorem]{Proposition}

% Misc
\newcommand{\xra}{\xrightarrow}

\newcommand{\vali}{\mathbf{0}}
\newcommand{\ZG}{ZG}
\newcommand{\tto}{\Rightarrow}

\newcommand{\fleq}{\preccurlyeq}

\newcommand{\lleq}{\mathrel{\triangleleft}}
\newcommand{\ggeq}{\mathrel{\triangleright}}

\newcommand{\regeq}{\simeq}

\renewcommand{\wp}{\operatorname{wp}}

\newcommand{\lu}{\fleq_{\scriptscriptstyle LU}}
\newcommand{\lud}{\fleq_{\scriptscriptstyle LU}^{\scriptscriptstyle
d}}

\newcommand{\lugg}{\fleq_{\scriptscriptstyle LU(\Gg)}}

\newcommand{\lugpar}[1]{\sqsubseteq_{\scriptscriptstyle {#1}}}
\newcommand{\lugdf}{\sqsubseteq_{\scriptscriptstyle \Gg^{-}}}

\newcommand{\lug}{\sqsubseteq_{\scriptscriptstyle \Gg}}
\newcommand{\lup}{\sqsubseteq_{\scriptscriptstyle \varphi}}
\newcommand{\lugo}{\sqsubseteq_{\scriptscriptstyle \Gg_1}}
\newcommand{\lugl}{\sqsubseteq_{\scriptscriptstyle \Gg}^*}
\newcommand{\luglu}{\sqsubseteq_{\scriptscriptstyle
\Gg}^{\scriptscriptstyle LU}}
\newcommand{\lugqlu}{\sqsubseteq_{\scriptscriptstyle
\Gg(q)}^{\scriptscriptstyle LU}}

\newcommand{\lugpl}{\sqsubseteq_{\scriptscriptstyle \Gg'}^*}

\newcommand{\lugone}{\sqsubseteq_{\scriptscriptstyle \Gg(q_1)}}
\newcommand{\gq}{\sqsubseteq_{\scriptscriptstyle \Gg(q)}}
\newcommand{\gqone}{\sqsubseteq_{\scriptscriptstyle \Gg(q_1)}}
\newcommand{\as}{\fleq_{\Aa}}
\newcommand{\aslu}{\fleq^{\scriptscriptstyle LU}_{\scriptscriptstyle \Aa}}

\newcommand{\lugp}{\sqsubseteq_{\scriptscriptstyle \Gg'}}

\newcommand{\down}[1]{{\downarrow}{#1}}
\newcommand{\downlu}[1]{{\downarrow_{\scriptscriptstyle LU}}{#1}}

\newcommand{\Extra}{\mathsf{Extra}}

\newcommand{\ExtraLUp}{\operatorname{Extra}_{\scriptscriptstyle LU}^+}
\newcommand{\ExtraM}{\operatorname{Extra}_{\scriptscriptstyle M}}

% symbol for upward closure of v under LU

%% Multimap without loading txfonts/pxfonts
\DeclareSymbolFont{symbolsC}{U}{txsyc}{m}{n}
\SetSymbolFont{symbolsC}{bold}{U}{txsyc}{bx}{n}
\DeclareFontSubstitution{U}{txsyc}{m}{n}
\DeclareMathSymbol{\multimapboth}{\mathrel}{symbolsC}{19}

%% Dashed right, left arrows
\makeatletter
\newcommand{\xdashrightarrow}[2][]{\ext@arrow 0359\rightarrowfill@@{#1}{#2}}
\newcommand{\xdashleftarrow}[2][]{\ext@arrow 3095\leftarrowfill@@{#1}{#2}}
\newcommand{\xdashleftrightarrow}[2][]{\ext@arrow 3359\leftrightarrowfill@@{#1}{#2}}
\def\rightarrowfill@@{\arrowfill@@\relax\relbar\rightarrow}
\def\leftarrowfill@@{\arrowfill@@\leftarrow\relbar\relax}
\def\leftrightarrowfill@@{\arrowfill@@\leftarrow\relbar\rightarrow}
\def\arrowfill@@#1#2#3#4{%
  $\m@th\thickmuskip0mu\medmuskip\thickmuskip\thinmuskip\thickmuskip
   \relax#4#1
   \xleaders\hbox{$#4#2$}\hfill
   #3$%
}
\makeatother

% Squiggly arrows with text
\usepackage{rotating}
\newcounter{sarrow}

%%% Local Variables:
%%% mode: plain-tex
%%% TeX-master: "main"
%%% End:

\begin{document}

\maketitle

%!TEX root = main.tex
\begin{abstract}
  A popular method for solving reachability in timed automata proceeds
  by enumerating reachable sets of valuations represented as zones. A
  na\"{i}ve enumeration of zones does not terminate. Various termination
  mechanisms have been studied over the years. Coming up with
  efficient termination mechanisms has been remarkably more
  challenging when the automaton has diagonal constraints in
  guards.

 In this paper, we propose a new termination mechanism for timed
 automata with diagonal constraints based on a new
 simulation relation between zones.  Experiments with an
 implementation of this simulation show
 significant gains over existing methods.
\end{abstract}

\section{Introduction}
\label{sec:introduction}

Timed automata have emerged as a popular model for systems with
real-time constraints~\cite{Alur:TCS:1994}.
Timed automata are finite automata extended with real-valued variables
called \emph{clocks}. All clocks are assumed to start at $0$, and
increase at the same rate. Transitions of the automaton can make use
of these clocks to disallow behaviours which violate timing
constraints. This is achieved by making use of \emph{guards} which are
constraints of the form $x \le 5$, $ x - y \ge 3$, $ y > 7$, etc where
$x, y$ are
clocks.
A transition guarded by $x \le 5$ says that it can be fired only when
the value of clock $x$ is $\le 5$. Another important feature is the
\emph{reset} of clocks in transitions. Each transition can specify a
subset of clocks whose values become $0$ once the transition is
fired. The combination of guards and resets allows to track timing
distance between events. A basic question that forms the core of timed
automata technology is \emph{reachability}: given a timed automaton,
does there exist an execution from its initial state to a final
state. This question is known to be decidable~\cite{Alur:TCS:1994}.
Various algorithms for this problem have been studied over the years
and have been implemented in
tools~\cite{Larsen:1997:UPPAAL,Bengtsson:Springer:2004,LTSMin,KRONOS,Wang:2004:RED}.

Since the clocks are real valued variables, the space of
configurations of a timed automaton (consisting of a state and a
valuation of the clocks) is infinite and an explicit enumeration is
not possible. The earliest solution to reachability was to partition
this space into a finite number of \emph{regions} and build a region
graph that provides a finite abstraction of the behaviour of the timed
automaton~\cite{Alur:TCS:1994}. However, this solution was not
practical. Subsequent works introduced the use of
\emph{zones}~\cite{Daws:TACAS:1998}. Zones are special sets of clock
valuations with efficient data structures and manipulation
algorithms~\cite{Bengtsson:Springer:2004}. Within zone based
algorithms, there is a division: forward analysis versus backward
analysis. The current industry strength tool
UPPAAL~\cite{Larsen:1997:UPPAAL} implements a forward analysis
approach, as this works better in the presence of other discrete data
structures used in UPPAAL
models~\cite{Bouyer:2004:forwardanalysis}. We focus on this forward
analysis approach using zones in this paper.

The forward analysis of a timed automaton essentially enumerates sets
of reachable configurations stored as zones. Some extra care needs to
be taken for this enumeration to terminate. Traditional development of
timed automata made use of \emph{extrapolation} operators over zones
to ensure termination. These are functions which map a zone to a
bigger zone. Importantly, the range of these functions is finite.
The goal was to come up with extrapolation operators which are sound:
adding these extra valuations should not lead to new behaviours. This
is where the role of \emph{simulations} between configurations was
studied and extrapolation operators based on such simulations were
devised~\cite{Daws:TACAS:1998}.  A certain extrapolation operation,
which is now known as $\Extra_M$~\cite{Behrmann:STTT:2006} was
proposed and reachability using $\Extra_M$ was implemented in
tools~\cite{Daws:TACAS:1998}.

A seminal paper by Bouyer~\cite{Bouyer:2004:forwardanalysis} revealed
that $\Extra_M$ is not correct in the presence of \emph{diagonal
  constraints} in guards. These are constraints of the form
$x - y \lleq c$ where $\lleq$ is either $<$ or $\le$, and $c$ is an
integer. Moreover, it was proved that no such extrapolation operation
would be correct when there are diagonal constraints
present.
It was shown that for automata without diagonal constraints
(henceforth referred to as diagonal-free automata), the extrapolation
works. After this result, developments in timed automata reachability
focussed on the class of diagonal-free automata
\cite{Behrmann:STTT:2006,Behrmann:TACAS:2003,Herbreteau:IandC:2016,Herbreteau:2015:FORMATS},
and diagonal constraints were mostly sidelined.  All these
developments have led to quite efficient algorithms for diagonal-free
timed automata.

Diagonal constraints are a useful modeling feature and occur naturally
in certain problems, especially
scheduling~\cite{DBLP:conf/tacas/FersmanPY02} and logic-automata
translations~\cite{Ferrere:2018:FM,Hsi-Ming:2018:Logics}. It is
however known that they do not add any expressive power: every timed
automaton can be converted into a diagonal-free timed
automaton~\cite{Berard:1998:FundInf}. This conversion suffers from an
exponential blowup, which was later shown to be unavoidable: diagonal
constraints could potentially give exponentially more succinct
models~\cite{Bouyer:2005:Conciseness}. Therefore, a good forward
analysis algorithm that works directly on a timed automaton with
diagonal constraints would be handy. This is the subject of this
paper.

\medskip\noindent \emph{Related work.} The first attempt at such an
algorithm was to split the (extrapolated) zones with respect to the
diagonal constraints present in the
automaton~\cite{Bengtsson:Springer:2004}. This gave a correct
procedure, but since zones are split, an enumeration starts from each
small zone leading to an exponential blow-up in the number of visited
zones. A second attempt was to do a more refined conversion into a
diagonal free automaton by detecting ``relevant''
diagonals~\cite{Bouyer:2005:diagonal-refinement,Reynier:toolreport} in
an iterative manner. In order to do this, special data structures
storing sets of sets of diagonal constraints were utilized. A more
recent attempt~\cite{Gastin:2018:CONCUR} extends the works
\cite{Behrmann:STTT:2006} and \cite{Herbreteau:IandC:2016} on
diagonal-free automata to the case of diagonal constraints. All the
approaches suffer from either a space or time bottleneck and are
incomparable to the efficiency and scalability of tools for
diagonal-free automata.

\medskip\noindent \emph{Our contributions.} The goal of this paper is
to come up with fast algorithms for handling diagonal
constraints. Since the extrapolation based approach is a dead end, we
work with simulation between zones directly, as in
\cite{Herbreteau:IandC:2016} and \cite{Gastin:2018:CONCUR}. We propose
a new simulation relation between zones that is correct in the
presence of diagonal constraints
(Section~\ref{sec:new-simul-relat}). We give an algorithm to test this
simulation between zones (Section~\ref{sec:algorithm-z-lug}). We have
incorporated this simulation test in the tool TChecker \cite{tchecker} that checks
reachability for timed automata, and compared our results with the
state-of-the-art tool UPPAAL. Experiments show an encouraging gain,
both in the number of zones enumerated and in the time taken by the
algorithm, sometimes upto four orders of magnitude
(Section~\ref{sec:experiments}). The main advantage of our approach is
that it does not split zones, and furthermore it leverages the
optimizations studied for diagonal-free
automata.

From a technical point of view, our presentation does not make use of
regions and instead works with valuations, zones and simulation
relations. We think that this presentation provides a clearer
perspective - as a justification of this claim, we extend our
simulation to timed automata with general updates of the form $x: = c$
and $x := y + d$ in transitions (where $x,y$ are clocks and $c, d$ are
constants) in a rather natural manner (Section~\ref{sec:updates}). In general, reachability for
timed automata with updates is undecidable~\cite{Bouyer:2004:Updateable}. Some
decidable cases have been proposed for which the algorithms are based
on regions. For decidable subclasses containing diagonal constraints,
no zone based approach has been studied. Our proposed method includes
these classes, and also benefits from zones and standard
optimizations studied for diagonal-free automata.

%!TEX root = main.tex
\newcommand{\Ss}{\mathcal{S}}
\newcommand{\elapse}[1]{\overrightarrow{#1}}
\section{Preliminaries}
\label{sec:preliminaries}

Let $\Nat$ be the set of natural numbers, $\Rpos$ the set of
non-negative reals and $\Int$ the set of integers. Let $X$ be a finite
set of variables ranging over $\Rpos$, called \emph{clocks}. Let
$\Phi(X)$ denote the set of constraints $\varphi$ formed using the
following grammar:
$\varphi := x \lleq c~ \mid ~ c \lleq x ~\mid~ x - y \lleq d ~\mid~
\varphi \land \varphi$, where $x, y \in X$, $c \in \Nat$, $d \in \Int$
and ${\lleq} \in \{<, \le\}$.  Constraints of the form $x \lleq c$ and
$c \lleq x$ are called \emph{non-diagonal constraints} and those of
the form $x - y \lleq c$ are called \emph{diagonal constraints}. We
have adopted a convention that in non-diagonal constraints $x \lleq c$
and $c \lleq x$, the constant $c$ is restricted to $\Nat$.  A
\emph{clock valuation} $v$ is a function which maps every clock
$x \in X$ to a real number $v(x) \in \Rpos$. A valuation is said to
satisfy a guard $g$, written as $v \models g$ if replacing every $x$
in $g$ with $v(x)$ makes the constraint $g$ true. For $\d \in \Rpos$
we write $v + \d$ for the valuation which maps every $x$ to
$v(x) + \d$. Given a subset of clocks $R \incl X$, we write $[R]v$ for
the valuation which maps each $x \in R$ to $0$ and each $x \not \in R$
to $v(x)$.

A \emph{timed automaton} $\Aa$ is a tuple $(Q, X, q_0, T, F)$ where
$Q$ is a finite set of states, $X$ is a finite set of clocks,
$q_0 \in Q$ is the initial state, $F \incl Q$ is a set of accepting
states and $T \in Q \times \Phi(X) \times 2^X \times Q$ is a set of
transitions. Each transition $t \in T$ is of the form $(q, g, R, q')$
where $q$ and $q'$ are respectively the source and target states, $g$
is a constraint called the \emph{guard}, and $R$ is a set of clocks
which are \emph{reset} in $t$. We call a timed automaton
\emph{diagonal-free} if guards in transitions do not use diagonal
constraints.

A \emph{configuration} of $\Aa$ is a pair $(q, v)$ where $q \in Q$ and
$v$ is a valuation. The semantics of a timed automaton is given by a
transition system $\Ss_\Aa$ whose states are the configurations of
$\Aa$. Transitions in $\Ss_\Aa$ are of two kinds: \emph{delay}
transitions are given by $(q, v) \xra{\d} (q, v + \d)$ for all
$\d \ge 0$, and \emph{action} transitions are given by
$(q, v) \xra{t} (q',v')$ for each $t := (q, g, R, q')$, if
$v \models g$ and $v' = [R]v$. We write $\xra{\d, t}$ for a sequence
of delay $\d$ followed by action $t$. A run of $\Aa$ is an alternating
sequence of delay-action transitions starting from the initial state
$q_0$ and the initial valuation $\vali$ which maps every clock to $0$:
$(q_0, \vali) \xra{\d_0, t_0} (q_1, v_1) \xra{\d_1,t_1} \cdots (q_n,
v_n)$. A run of the above form is said to be accepting if the last
state $q_n \in F$.  The \emph{reachability problem} for timed automata
is the following: given an automaton $\Aa$, decide if there exists an
accepting run. This problem is known to be
$\PSPACE$-complete~\cite{Alur:TCS:1994}. Since the semantics $\Ss_\Aa$
is infinite, solutions to the reachability problem work with a finite
abstraction of $\Ss_\Aa$ that is sound and complete. Before we explain
one of the popular solutions to reachability, we state a result which
allows to convert every timed automaton into a diagonal-free timed
automaton.

\begin{theorem}~\cite{Berard:1998:FundInf} \label{thm:diagonal-to-diagonalfree}
  For every timed automaton $\Aa$, there exists a diagonal-free timed
  automaton $\Aa_{df}$ s.t. there is a bijection between runs of $\Aa$
  and $\Aa_{df}$. The number of states in $\Aa_{df}$ is $2^d\cdot n$
  where $d$ is the number of diagonal constraints and $n$ is the
  number of states of $\Aa$.
\end{theorem}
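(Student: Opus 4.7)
The plan is to construct $\Aa_{df}$ by tagging each state of $\Aa$ with a bit-vector that records, for every diagonal atom $d$ appearing in a guard of $\Aa$, whether the current valuation satisfies $d$. Let $D = \{d_1, \ldots, d_k\}$ be this set of diagonal atoms and define $\Aa_{df} = (Q \times 2^D,\, X,\, (q_0, b_0),\, T',\, F \times 2^D)$, where $b_0(d) = 1$ iff $\vali \models d$. The state count is then $2^{|D|} \cdot |Q|$, as claimed.

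I would build $T'$ from $T$ as follows. For each original transition $(q, g, R, q')$ and each pair $(b, b') \in 2^D \times 2^D$, introduce the transition $((q,b),\, g_{b,b',R},\, R,\, (q',b'))$ where the diagonal-free guard $g_{b,b',R}$ is the conjunction of: (i) the formula obtained from $g$ by replacing each diagonal atom $d$ with its truth value $b(d)$ (discarding the transition when this simplifies to $\fff$); and (ii) a \emph{consistency check} asserting that any valuation $v$ whose diagonal-type is $b$ yields, after the reset $[R]v$, diagonal-type $b'$. The point is that (ii) can be written without using diagonal atoms: for each $d = (x - y \lleq c)$, if $x, y \in R$ then $[R]v \models d$ iff $0 \lleq c$, a constant; if $x \in R,\ y \notin R$ it becomes the non-diagonal atom $v(y) \ggeq -c$; if $y \in R,\ x \notin R$ it becomes $v(x) \lleq c$; and if $R \cap \{x,y\} = \es$ the truth is preserved, so the check is simply $b(d) = b'(d)$. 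Take the initial state $(q_0, b_0)$ and the final set $F \times 2^D$.

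What remains is to exhibit the bijection between runs. I would prove by induction on length the invariant: for every run $(q_0, \vali) \xra{\d_0, t_0} \cdots (q_n, v_n)$ of $\Aa$ there is exactly one run $(q_0, b_0) \xra{\d_0, t'_0} \cdots ((q_n, b_n), v_n)$ of $\Aa_{df}$ with $b_i(d) = 1 \iff v_i \models d$ for all $i$ and all $d \in D$. Two observations drive the induction: first, the truth of any diagonal atom is invariant under time elapse, so after the delay $\d_i$ the bit-vector $b_i$ still encodes the diagonals true at $v_i + \d_i$; second, by the inductive hypothesis the substituted guard in (i) agrees with $g$ on this valuation, while the consistency check in (ii) forces exactly one admissible $b_{i+1}$. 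The projection from runs of $\Aa_{df}$ to runs of $\Aa$ is the obvious forgetful map, and acceptance is preserved because the final set is $F \times 2^D$.

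The main obstacle is executing the case analysis for the consistency check cleanly, in particular dealing with strict versus non-strict $\lleq$ and with possibly negative $c$, and making sure that pairs $(b, b')$ whose consistency check is unsatisfiable (or whose substituted guard simplifies to $\fff$) produce no transition, so that one obtains a genuine bijection rather than merely a surjection onto runs. Once the invariant tying $b_i$ to the diagonal-type of $v_i$ is in place, both directions of the bijection and the preservation of accepting runs follow directly.
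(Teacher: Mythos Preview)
Your proposal is correct and follows essentially the same approach as the paper: both track the truth values of the diagonal atoms in the control state and use the same case analysis on which of $x,y$ are reset to update these bits via non-diagonal guards. The only cosmetic difference is that the paper removes one diagonal at a time and iterates, whereas you encode all diagonals at once as a bit-vector in $2^D$; the resulting automaton and the bijection argument are the same.
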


\begin{proof}(Sketch) We give the construction to remove one diagonal.
  This can be applied iteratively to remove all diagonal
  constraints. Let $x - y \lleq c$ be a diagonal constraint appearing
  in $\Aa$. Initially, the value of $x - y$ is $0$ and hence
  $x - y \lleq c$ is true iff $0 \lleq c$. The value $x - y$ does not
  change by time elapse. It can change only due to resets. Therefore,
  the idea is to use the states of the automaton to maintain if the
  constraint is true or false. Let $\Aa'$ be the new automaton which
  does not contain $x - y \lleq c$.

  The states of $\Aa'$ are of the form $Q \times \{0, 1\}$. For every
  transition $(q, g, R, q')$ of $\Aa$ we have the following
  transitions in $\Aa'$, where $g' = g \setminus \{x -y \lleq c\}$,
  and $b = 1$ if $g$ contains $x - y \lleq c$, and $b \in \{0,1\}$
  otherwise:
  \begin{itemize}
  \item if $\{x, y\} \cap R = \emptyset$, then
    $(q, b) \xra[R]{~g'~} (q', b)$
  \item if $\{x, y\} \subseteq R$, then
    $(q, b) \xra[R]{~g'~} (q', b')$ where $b' = 1$ if $0 \lleq c$, and
    $b'=0$ otherwise
  \item if $x \in R$ and $y \notin R$, there are transitions
    $(q, b) \xra[R]{~g' \land -y \lleq c~} (q', 1)$ and
    $(q, b) \xra[R]{~g' \land -y \ggeq_1 c~} (q', 0)$ where $\ggeq_1$
    is $>$ if $\lleq$ equals $\le$; and $\ge$ otherwise. The guards
    $-y \lleq c$ and $-y \ggeq_1 c$ can be suitably removed if they
    are either trivially true or false, and if not, rewritten to the
    form $y \sim d$ where $d \ge 0$.
  \item if $x \not \in R$ and $y \in R$, the construction is similar
    to the above case.
  \end{itemize}

  The initial state is $(q_0, 0 \lleq c)$ and the accepting states are
  $F \times \{0, 1\}$. Note that there is a one-one correspondence
  between runs of $\Aa$ and $\Aa'$ and moreover $\Aa'$ does not have
  the diagonal constraint $x -y \lleq c$.
\end{proof}

The above theorem allows to solve the reachability of a timed
automaton $\Aa$ by first converting it into the diagonal free
automaton $\Aa_{df}$ and then checking reachability on
$\Aa_{df}$. However, this conversion comes with a systematic
exponential blowup (in terms of the number of diagonal constraints
present in $\Aa$). It was shown in \cite{Bouyer:2005:Conciseness} that
such a blowup is unavoidable in general. We will now recall the
general algorithm for analyzing timed automata, and then move into
specific details which depend on whether the automaton has diagonal
constraints or not.

\subsubsection*{Zones and simulations}
Fix a timed automaton $\Aa$ with clock set $X$ for the rest of the
discussion in this section. As the space of valuations of $\Aa$ is
infinite, algorithms work with sets of valuations called
\emph{zones}. A zone is set of clock valuations given by a conjunction
of constraints of the form $x - y \lleq c$, $x \lleq c$ and
$c \lleq x$ where $c \in \Int$ and ${\lleq} \in \{<, \le\}$, for example
the solutions of $x - y < 5 \land y \le 10$ is a zone. The transition
relation over configurations $(q,v)$ is extended to $(q, Z)$ where $Z$
is a zone. We define the following operations on zones given a guard
$g$ and a set of clocks $R$: time elapse
$\elapse{Z} = \{ v + \d~|~ v \in Z, \d \ge 0\}$; guard intersection
$Z \land g := \{v~|~v \in Z \text{ and } v \models g \}$ and reset
$[R]Z := \{ [R]v~|~ v \in Z\}$. It can be shown that all these
operations result in zones. Zones can be efficiently represented and
manipulated using Difference Bound Matrices (DBMs)
\cite{Dill:1990:DBM}.

The \emph{zone graph} $\ZG(\Aa)$ of timed automaton $\Aa$ is a
transition system whose nodes are of the form $(q, Z)$ where $q$ is a
state of $\Aa$ and $Z$ is a zone. For each transition
$t:= (q, g, R, q')$ of $\Aa$, and each zone $(q, Z)$ there is a
transition $(q, Z) \tto^t (q', Z')$ where
$Z' = \elapse{[R](Z \land g)}$. The initial node is $(q_0, Z_0)$ where
$q_0$ is the initial state of $\Aa$ and
$Z_0 = \{ \vali + \d~|~\d \ge 0\}$ is the zone obtained by elapsing an
arbitrary delay from the initial valuation. A path in the zone graph
is a sequence
$(q_0, Z_0) \tto^{t_0} (q_1, Z_1) \tto^{t_1} \cdots \tto^{t_{n-1}}
(q_n, Z_n)$ starting from the initial node. The path is said to be
accepting if $q_n$ is an accepting state. The zone graph is known to
be sound and complete for reachability.

\begin{theorem}\cite{Daws:TACAS:1998}
  $\Aa$ has an accepting run iff $\ZG(\Aa)$ has an accepting path.
\end{theorem}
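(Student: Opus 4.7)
The plan is to prove the two implications separately, each by induction on the number of action transitions, aligning a run of $\Aa$ of the form $(q_0, \vali) \xra{\d_0, t_0} (q_1, v_1) \xra{\d_1, t_1} \cdots \xra{\d_{n-1}, t_{n-1}} (q_n, v_n)$ with a zone-graph path $(q_0, Z_0) \tto^{t_0} (q_1, Z_1) \tto^{t_1} \cdots \tto^{t_{n-1}} (q_n, Z_n)$ that fires the identical sequence $t_0, \ldots, t_{n-1}$. First I would record the easy but crucial invariant that every zone reachable in $\ZG(\Aa)$ is time-elapse closed (i.e.\ $\elapse{Z_i} = Z_i$): this holds for $Z_0$ by construction, and is preserved because each successor zone is explicitly of the form $\elapse{\cdots}$.

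For the ``only if'' direction, starting from an accepting run, use the same transition sequence to define a candidate path in $\ZG(\Aa)$. By induction on $i$ I would prove $v_i \in Z_i$. The base case $v_0 = \vali \in Z_0$ is immediate (take $\d = 0$). For the step, the time-elapse invariant gives $v_i + \d_i \in Z_i$; since $v_i + \d_i \models g_i$, we have $v_i + \d_i \in Z_i \land g_i$; applying the reset, $v_{i+1} = [R_i](v_i + \d_i) \in [R_i](Z_i \land g_i) \subseteq \elapse{[R_i](Z_i \land g_i)} = Z_{i+1}$. Every zone along the path is nonempty, so the path is well-defined and terminates in the accepting state $q_n$.

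For the ``if'' direction, I would strengthen the inductive statement: for any path $(q_0, Z_0) \tto^{t_0} \cdots \tto^{t_{n-1}} (q_n, Z_n)$ and any $v \in Z_n$, there is a run of $\Aa$ ending at some valuation from which $v$ is reachable by an additional delay. The base case $n = 0$ is handled by the delay $v - \vali \ge 0$. For the step, unpack $Z_n = \elapse{[R_{n-1}](Z_{n-1} \land g_{n-1})}$: any $v \in Z_n$ can be written as $[R_{n-1}]u + \d$ with $u \in Z_{n-1}$, $u \models g_{n-1}$, and $\d \ge 0$. Applying the induction hypothesis to the shorter path and the valuation $u \in Z_{n-1}$ yields a run reaching $(q_{n-1}, w)$ with $u = w + \d'$ for some $\d' \ge 0$; firing $t_{n-1}$, which is enabled on $u$, then brings us to $(q_n, [R_{n-1}]u)$. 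Since $q_n \in F$ by assumption, this is an accepting run.

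The main obstacle is not conceptual but notational: the outermost $\elapse{}$ in the zone successor means that valuations recorded in $Z_{i+1}$ are in general of the form $[R_i]u + \d'$ rather than merely $[R_i]u$, so the inductive invariant for soundness must either absorb a trailing delay, as above, or be restated purely in terms of state reachability. I would opt for the strengthened form since it keeps the backward unpacking entirely mechanical and reduces every inductive step to a single calculation on the defining equation of the successor zone.
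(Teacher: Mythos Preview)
The paper does not supply its own proof of this theorem; it is stated as a cited result from \cite{Daws:TACAS:1998} and used as a black box. Your argument is the standard one and is correct.

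One small caveat worth making explicit: the paper's definition of a path in $\ZG(\Aa)$ does not formally exclude empty zones, so for the ``if'' direction your strengthened induction hypothesis (quantifying over $v \in Z_n$) is vacuous when $Z_n = \emptyset$. The intended reading---confirmed by the reachability algorithm later in the paper, which discards empty successors---is that accepting paths end in a non-empty zone; you should state this assumption when invoking the induction to extract an actual run.
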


This does not yet give an algorithm as the zone graph $\ZG(\Aa)$ is
still not finite. Moreover, there are examples of automata for which
the reachable part of $\ZG(\Aa)$ is also infinite: starting from the
initial node, applying the successor computation leads to infinitely
many zones. Two different approaches have been studied to get
finiteness, both of them based on the usage of \emph{simulation
  relations}. A (time-abstract) simulation relation $\fleq$ between
configurations of $\Aa$ is a reflexive and transitive relation such
that $(q, v) \fleq (q', v')$ implies $q = q'$ and (1) for every
$\d \ge 0$, there exists $\d' \ge 0$ such that
$(q, v+\d) \fleq (q, v' + \d')$ and (2) for every transition $t$ of
$\Aa$, if $(q, v) \xra{t} (q_1, v_1)$ then
$(q, v') \xra{t} (q_1, v_1')$ such that
$(q_1, v_1) \fleq (q_1,v_1')$\label{ta-simulation}. We say
$v \fleq v'$, read as $v$ is simulated by $v'$ if
$(q,v) \fleq (q, v')$ for all states $q$. The simulation relation can
be extended to zones: $Z \fleq Z'$ if for every $v \in Z$ there exists
$v' \in Z'$ such that $v \fleq v'$. We write $\down{Z}$ for
$\{ v ~|~ \exists v' \in Z \text{ s.t. } v \fleq v'\}$. The simulation
relation $\fleq$ is said to be finite if the function mapping zones
$Z$ to the down sets $\down{Z}$ has finite range. We now recall a
specific simulation relation which was shown to be finite and correct
for diagonal-free timed automata. Current algorithms and tools for
diagonal-free automata are based on this simulation. The conditions
required for $v \lu v'$ ensure that when all lower bound constraints
$c \lleq x$ satisfy $c \le L(x)$ and all upper bound constraints $x
\lleq c$ satisfy $c \le U(x)$, whenever $v$ satisfies a constraint,
$v'$ will also satisfy it.

\begin{definition}[LU-bounds and the relation
  $\lu$~\cite{Behrmann:STTT:2006,Herbreteau:IandC:2016}]
  \label{def:lu-sim}
  An $LU$-bounds function is a pair of functions
  $L: X \mapsto \Nat \cup \{ -\infty\}$ and
  $U: X \mapsto \Nat \cup \{-\infty\}$ that map each clock to either a
  non-negative constant or $-\infty$. Given an $LU$-bounds function,
  we define $v \lu v'$ for valuations $v, v'$ if for every clock
  $x \in X$:
  \begin{align*}
    \text{ $v'(x) < v(x)$ implies $L(x) < v'(x)$ } \quad \text{ and }
    \quad \text{ $v(x) < v'(x)$ implies $U(x) < v(x)$. }
  \end{align*}
\end{definition}

The $\lu$ simulation is extended to zones as mentioned above. Let us
write $\downlu{Z}$ for
$\{ v~|~ \exists v' \in Z \text{ with } v \lu v'\}$.

\begin{theorem}\cite{Behrmann:STTT:2006}\label{thm:lu-is-correct-diagonal-free}
  Let $\Aa$ be a diagonal-free timed automaton. Let $LU$ be a bounds
  function such that $c \le L(x)$ for every (lower-bound) guard
  $c \lleq x$ appearing in $\Aa$ and $c \le U(x)$ for every
  (upper-bound) guard $x \lleq c$ of $\Aa$. Then, the relation $\lu$
  is a finite simulation on the configurations of
  $\Aa$.
\end{theorem}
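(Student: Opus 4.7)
The plan is to separate the statement into (i) $\lu$ being a simulation — that is, reflexive, transitive, and closed under delay and action transitions — and (ii) $\lu$ inducing finitely many down-sets on zones. The first part is routine bookkeeping against the definition of $\lu$; the second is the technical core.

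I would dispose of reflexivity immediately (both hypotheses $v'(x)<v(x)$ and $v(x)<v'(x)$ are vacuous when $v=v'$) and prove transitivity by a short case split: given $v \lu v' \lu v''$, for each clock $x$ with $v''(x)<v(x)$, either $v'(x) \le v''(x)$ (so $v'(x)<v(x)$, and $v \lu v'$ gives $L(x) < v'(x) \le v''(x)$) or $v'(x)>v''(x)$ (so $v' \lu v''$ gives $L(x) < v''(x)$); the upper clause is symmetric.

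For delay closure, I would simply take $\d' = \d$: the strict inequality $(v'+\d)(x) < (v+\d)(x)$ forces $v'(x)<v(x)$, hence $L(x) < v'(x) \le (v'+\d)(x)$, and the upper clause is analogous. For action closure, consider $(q,v) \xra{t} (q_1,v_1)$ with $t=(q,g,R,q_1)$. Diagonal-freeness of $\Aa$ restricts $g$ to a conjunction of atoms $c \lleq x$ with $c \le L(x)$ and $x \lleq c$ with $c \le U(x)$. For a lower atom, if $v'(x) \ge v(x)$ the atom transfers; otherwise $v'(x) < v(x)$ yields $v'(x) > L(x) \ge c$, and the atom holds again. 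For an upper atom, $v'(x) > v(x)$ would force $U(x) < v(x) \le c \le U(x)$, a contradiction, so $v'(x) \le v(x) \le c$. Hence $v' \models g$, the same transition fires from $v'$, and the reset preserves $\lu$ componentwise — reset clocks become $0$ on both sides, the others are unchanged.

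The main obstacle is finiteness. My approach is to introduce a finite-index equivalence $\equiv_{LU}$ on valuations inspired by the classical $LU$-region construction: two valuations are equivalent iff at every clock $x$ they either both exceed the threshold $\max(L(x),U(x))$, or they agree on integer parts and on the ordering of fractional parts below the threshold. There are only finitely many such classes over $X$. I would then show that $\downlu{Z}$ is always a union of $\equiv_{LU}$-classes, which bounds the number of possible down-sets. The technical crux is checking that $\equiv_{LU}$ preserves membership in $\downlu{Z}$: given $v_1 \in \downlu{Z}$ via a witness $w \in Z$ and $v_2 \equiv_{LU} v_1$, produce $w' \in Z$ with $v_2 \lu w'$. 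This relies on the convexity of $Z$ as a zone and on the integrality of the $L, U$ bounds, and is where the real work lies.
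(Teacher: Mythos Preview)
The paper does not actually prove this theorem: it is quoted as a known result from \cite{Behrmann:STTT:2006} (with the ``finite simulation'' formulation drawing on \cite{Herbreteau:IandC:2016}) and no argument is supplied. There is therefore no in-paper proof to compare your proposal against.

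Evaluating your proposal on its own merits: the simulation part is correct. Reflexivity and transitivity are fine; taking $\d'=\d$ for delay closure works exactly as you say; and your case analysis for action closure correctly uses $c\le L(x)$ and $c\le U(x)$ to push each atomic guard from $v$ to $v'$. The reset argument is also right, since reset clocks coincide and non-reset clocks are unchanged, so the $\lu$ conditions persist coordinate-wise.

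For finiteness, your plan is essentially the route taken in \cite{Behrmann:STTT:2006}, but the hinge you name --- ``convexity of $Z$'' --- is not the right one. What actually makes the argument go through is the pair of facts: (a) every zone defined by \emph{integer} constants is a union of classical regions (a standard fact that uses integrality, not convexity); and (b) if $v_1\equiv_{LU} v_2$ and $v_1\lu w_1$, then one can build $w_2$ \emph{in the same region as $w_1$} satisfying $v_2\lu w_2$. Combining these, $w_1\in Z$ forces $w_2\in Z$, hence $v_2\in\downlu{Z}$, and $\downlu{Z}$ is a union of region classes as you want. Step~(b) is the real work and does require the fractional-part bookkeeping you allude to; your outline is sound, with the gap honestly located, but you should replace ``convexity'' by ``zones are unions of regions'' when you fill it in.
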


\subsubsection*{Reachability in diagonal-free timed automata}
\label{sec:reach-diag-free}

A natural method to get finiteness of the zone graph is to prune the
zone graph computation through simulations $Z \lu Z'$: do not explore
a node $(q,Z)$ if there is an already visited node $(q,Z')$ such that
$Z \lu Z'$. Since these simulation tests need to be done often during
the zone graph computation, an efficient algorithm for performing this
test is crucial. Note that $Z \lu Z'$ iff $Z \incl
\downlu{Z'}$. However, it is known that the set $\downlu{Z'}$ is not
necessarily a zone~\cite{Behrmann:STTT:2006}, and hence no simple zone
inclusions are applicable.  The first algorithms for timed automata
followed a different approach, which we call the \emph{extrapolation}
approach. A function $\ExtraLUp$~\cite{Behrmann:STTT:2006} on zones
was defined, which had two nice properties - (1)
$\ExtraLUp(Z) \incl \downlu{Z}$ and (2) $\ExtraLUp(Z)$ is a zone for
all $Z$. These properties give rise to an algorithm that performs only
efficient zone operations: successor computations and zone inclusions.

\label{algo:extrapolation-reachability}
\medskip\noindent \emph{Reachability algorithm using zone
  extrapolation.} The input to the algorithm is a timed automaton
$\Aa$.  The algorithm maintains two lists, Passed and
Waiting. Initially, the node $(q_0, \ExtraLUp(Z_0))$ is added to the
Waiting list (recall that $(q_0, Z_0)$ is the initial node of the zone
graph $ZG(\Aa)$).  Wlog.\ we assume that $q_0$ is not accepting.  The
algorithm repeatedly performs the following steps:
\begin{description}
\item[Step 1.] If Waiting is empty, then return ``$\Aa$ has no
  accepting run''; else pick (and remove) a node $(q, Z)$ from
  Waiting. Add $(q, Z)$ to Passed.
\item[Step 2.] For each transition $t:=(q, g, R, q_1)$, compute the
  successor $(q, Z) \tto^t (q_1, Z_1)$: if $Z_1 \neq \emptyset$
  perform the following operations - if $q_1$ is accepting, return
  ``$\Aa$ has an accepting run''; else compute
  $\hat{Z}_1 := \ExtraLUp(Z_1)$ and check if there exists a node
  $(q_1, Z_1')$ in Passed or Waiting such that $\hat{Z}_1 \incl Z_1'$:
  if yes, ignore the node $(q_1, \hat{Z}_1)$, otherwise add
  $(q_1, \hat{Z}_1)$ to Waiting.
\end{description}

The correctness of this algorithm follows from
Theorem~\ref{thm:lu-is-correct-diagonal-free} and the fact that
$\ExtraLUp(Z) \incl
\downlu{Z}$~\cite{Behrmann:STTT:2006}. Intuitively, the additional
valuations on top of $Z$ in $\ExtraLUp(Z)$ belong to $\downlu{Z}$, and
hence are simulated by some valuation in $Z$. This therefore entails
that there are no new states visited due to these extra valuations.
More recently, an $\Oo(|X|^2)$ algorithm for $Z \lu Z'$ was proposed
\cite{Herbreteau:IandC:2016}. This algorithm has the same complexity
as the test for inclusion between zones. Additionally, the $\lu$ test
was a simple extension of the zone inclusion algorithm, to account for
the $LU$ bounds. This gives an algorithm that can avoid explicit
extrapolations, and also use a better set $\downlu{Z'}$ to prune the
search.

\label{algo:simulation-reachability}
\medskip\noindent \emph{Reachability algorithm using simulations.}  The initial
node $(q_0, Z_0)$ is added to the Waiting list.  Wlog.\ we assume that
$q_0$ is not accepting.  The algorithm repeatedly performs the
following steps:
\begin{description}
\item[Step 1.] If Waiting is empty, then return ``$\Aa$ has no
  accepting run''; else pick (and remove) a node $(q, Z)$ from
  Waiting. Add $(q, Z)$ to Passed.
\item[Step 2.] For each transition $t:=(q, g, R, q_1)$, compute the
  successor $(q, Z) \tto^t (q_1, Z_1)$: if $Z_1 \neq \emptyset$
  perform the following operations - if $q_1$ is accepting, return
  ``$\Aa$ has an accepting run''; else check if there exists a node
  $(q_1, Z_1')$ in Passed or Waiting such that $Z_1 \lu Z_1'$: if yes,
  ignore the node $(q_1, Z_1)$, otherwise add $(q_1, Z_1)$ to Waiting.
\end{description}

\subsubsection*{Reachability in the presence of diagonal constraints}
\label{sec:reach-timed-autom-diagonals}

The $\lu$ relation is no longer a
simulation when diagonal constraints are present. Moreover, it was
shown in \cite{Bouyer:2004:forwardanalysis} that no extrapolation
operator (along the lines of $\ExtraLUp$) can work in the presence of
diagonal constraints. The first option to deal with diagonals is to
use Theorem~\ref{thm:diagonal-to-diagonalfree} to get a diagonal free
automaton and then apply the methods discussed previously. One problem
with this is the systematic exponential blowup introduced in the
number of states of the resulting automaton. Another problem is to get
diagnostic information: counterexamples need to be translated back to
the original automaton \cite{Bengtsson:Springer:2004}. Various methods
have been studied to circumvent the diagonal free conversion and
instead work on the automaton with diagonal constraints directly. We
recall the approach used in the state-of-the-art tool UPPAAL below.

\medskip\noindent \emph{Zone splitting
  \cite{Bengtsson:Springer:2004}.} The paper introducing timed
automata gave a notion of equivalence between valuations
$v \regeq_M v'$ parameterized by a function $M$ mapping each clock $x$
to the maximum constant $M$ among the guards of the automaton that
involve $x$. This equivalence is a finite simulation for diagonal-free
automata. Equivalence classes of $\regeq_M$ are called regions. This
was extended to the diagonal case by \cite{Bengtsson:Springer:2004}
as: $v \regeq_M^d v'$ if $v \regeq_Mv'$ and for all diagonal
constraints $g$ present in the automaton, if $v \models g$ then
$v' \models g$. The $\regeq_M^d$ relation splits the regions further,
such that each region is either entirely included inside $g$, or
entirely outside $g$ for each $g$. The next step is to use this notion
of equivalence in zones. The paper \cite{Bengtsson:Springer:2004}
follows the extrapolation approach: to each zone $Z$, an extrapolation
operation $\ExtraM(Z)$ is applied; this adds some valuations which are
$\regeq_M$ equivalent to valuations in $Z$; then it is further split
into multiple zones, so that each small zone is either inside $g$ or
outside $g$ for each diagonal constraint $g$. If $d$ is the number of
diagonal constraints present in the automaton, this splitting process
can give rise to $2^d$ zones for each zone $Z$. From each small zone,
the zone graph computation is started. Essentially, the exponential
blow-up at the state level which appeared in the diagonal-free
conversion now appears in the zone level.

\medskip\noindent \emph{Finding relevant diagonal constraints
  \cite{Bouyer:2005:diagonal-refinement}.} This approach wants to perform the
diagonal-free conversion, however restricted to a subset of diagonal
constraints. Initially, none of the diagonal constraints are removed
and the extrapolated zone graph (using $\ExtraM$) is computed. Since
this graph contains more valuations, this is an over-approximation: if
there is a path over these extrapolated zones to an accepting state,
it might not be an actual path over concrete zones. Hence, each time
an accepting state is reached, a concrete zone computation is done to
check if it is spurious. If it is indeed spurious, the path was
disabled in between by some guard. The goal is to now find all
diagonal constraints along the path that were responsible for this
disability: the authors propose to do this by maintaining for each
zone $Z$, and each pair of clocks $x, y$, a set of diagonal
constraints that were used to get the current value of $x - y$ in
$Z$. In order to get smaller sets of diagonal constraints, the authors
in fact maintain a set-of-sets so that any set in this set-of-sets can
be picked as a candidate. Once this candidate set $S$ is picked, the
diagonal free conversion restricted to $S$ is done. This is the
refinement. Again, an extrapolated zone graph is computed, and the
process is continued till there are no false positives. Since each
refinement removes a diagonal constraint, in the end the algorithm
will compute the full diagonal-free conversion. This method works well
when not many diagonal constraints result in false positives.

\medskip\noindent \emph{$LU$ extended to diagonals.} Recently,
\cite{Gastin:2018:CONCUR} gave an extension of $LU$ simulation, called
$\lud$, to the case of diagonals. This was parameterized by two
constants $LU$ associated to each $x - y$ for $x, y \in X$.  An
algorithm $Z \lud Z'$ was developed. However, the complexity of this
test is unsatisfactory:it was shown that deciding $Z \lud Z'$ is
NP-hard. Although this results in a tremendous decrease in the number
of nodes explored, the algorithm performs bad in terms of timing due
to the implementation of the simulation test.

In this paper, we propose a new simulation to handle diagonal
constraints. This has two advantages - using this avoids the blow-up
in the number of nodes arising due to zone splitting, and the
simulation test between zones has an efficient implementation and is
significantly quicker than the simulation of
\cite{Gastin:2018:CONCUR}.

%!TEX root = main.tex
\section{A new simulation relation}
\label{sec:new-simul-relat}

In this section, we start with a definition of a relation between
timed automata configurations, which in some sense ``declares''
upfront what we need out of a simulation relation that can be utilized
in a reachability algorithm. As we proceed, we will make its
description more concrete and give an effective simulation algorithm
between zones, that can be implemented. Fix a set of clocks $X$. This
generates constraints $\Phi(X)$.

\begin{definition}[the relation $\lug$]
  \label{def:lug-simulation}
  Let $\Gg$ be a (finite or infinite) set of
  constraints. We say $v \lug v'$ if for all $\varphi \in \Gg$ and
  all $\d \ge 0$,  $v + \d \models \varphi$ implies
  $v' + \d \models \varphi$.
\end{definition}

Our goal is to utilize the above relation in a simulation (as defined
in Page~\pageref{ta-simulation}) for a timed automaton. Directly from
the definition, we get the following lemma which shows that the $\lug$
relation is preserved under time elapse.

\begin{lemma}\label{lem:lug-over-time}
  If $v \lug v'$, then $v + \d \lug v' + \d$ for all $\d \ge 0$.
\end{lemma}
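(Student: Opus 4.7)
The statement is essentially an immediate consequence of unfolding the definition of $\lug$, so the plan is short. The key observation is that the quantifier over delays in the definition of $\lug$ is uniform, and composing two delays just gives another nonnegative delay.

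My plan is to start from the goal $v+\d \lug v'+\d$, unfold the definition: I need to show that for every $\varphi \in \Gg$ and every $\d' \ge 0$, if $(v+\d)+\d' \models \varphi$, then $(v'+\d)+\d' \models \varphi$. I would rewrite $(v+\d)+\d'$ as $v+(\d+\d')$, using that clock-valuation addition is associative and commutative (this is a direct unfolding: for each clock $x$, $((v+\d)+\d')(x) = v(x)+\d+\d'$). The same rewriting applies on the $v'$ side.

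At this point the hypothesis $v \lug v'$, instantiated with the constraint $\varphi$ and the nonnegative delay $\d+\d' \ge 0$, finishes the argument: $v+(\d+\d') \models \varphi$ implies $v'+(\d+\d') \models \varphi$, which after un-rewriting is exactly what was needed.

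There is no real obstacle; the only thing to note is that $\d+\d' \ge 0$ whenever $\d,\d' \ge 0$, so the instantiation of the hypothesis is valid. No case analysis on the shape of $\varphi$ (diagonal vs.\ non-diagonal) is necessary since the definition of $\lug$ treats all constraints in $\Gg$ uniformly. This also explains why the lemma is stated for arbitrary $\Gg$ and not just for constraints appearing in the automaton.
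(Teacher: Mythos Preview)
Your proposal is correct and matches the paper's approach: the paper does not give an explicit proof, simply stating that the lemma follows ``directly from the definition,'' and your unfolding via the identity $(v+\d)+\d' = v+(\d+\d')$ together with an instantiation of the hypothesis at delay $\d+\d'$ is precisely that direct argument.
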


The other kind of transformation over valuations is resets. Given sets
of guards $\Gg_1$, $\Gg$ and a set of clocks $R$, we want to find
conditions on $\Gg_1$ and $\Gg$ so that if $v \lugo v'$ then
$[R]v \lug [R]v'$. To do this, we need to answer this question: what
guarantees should we ensure for $v, v'$ (via $\Gg_1$) so that
$[R]v \lug [R]v'$.  This motivates the next definition.

\begin{definition}[weakest pre-condition of $\lug$ over resets]\label{def:wp-resets}
  For a constraint $\varphi$ and a set of clocks $R$, we define a
  set of constraints $\wp(\lup, R)$ as follows: when $\varphi$ is of the form
  $x \lleq c$ or $c \lleq x$, then $\wp(\lup, R)$ is empty if
  $x \in R $ and is $\{\varphi\}$ otherwise; when $\varphi$ is a diagonal
  constraint $x - y \lleq c$, then $\wp(\lup, R)$ is:
  \begin{itemize}[nosep]
  \item $\{x - y \lleq c\}$ if $\{x, y\} \cap R = \emptyset$
  \item $\{x \lleq c\}$ if $y \in R$, $x \not \in R$ and $c \ge 0$
  \item $\{- c \lleq y\}$ if $x \in R$, $ y \not \in R$ and $-c \ge 0$
  \item empty, otherwise.
  \end{itemize}
  For a set of guards $\Gg$, we define
  $\wp(\lug, R) := \bigcup_{\varphi \in \Gg} \wp(\lup, R)$.
\end{definition}

Note that the relation $\lug$ is parameterized by a set of constraints.
Additionally, we desire this set to be finite, so that the relation can be used
in an algorithm.  We need to first link an automaton $\Aa$ with such a set of
constraints.  One way to do it is to take the set of all guards present in the
automaton and to close it under weakest pre-conditions with respect to all
possible subsets of clocks.  A better approach is to consider a set of
constraints for each state, as in \cite{Behrmann:TACAS:2003} where the
parameters for extrapolation (the maximum constants appearing in guards) are
calculated at each state.

\begin{definition}[State based guards]
  \label{def:local-guards}
  Let $\Aa = (Q, X, q_0, T, F)$ be a timed automaton. We associate a
  set of guards $\Gg(q)$ for each state $q \in Q$, which is the least
  set of guards (for the coordinate-wise subset inclusion order) such
  that for every transition $(q, g, R, q_1)$: the guard $g$ and the
  set $\wp(\lugone, R)$ are present in $\Gg(q)$. More precisely,
  $\{\Gg(q)\}_{q \in Q}$ is the least solution to the following set of
  equations written for each $q \in Q$:
  \begin{align*}
    \Gg(q) = \bigcup_{(q, g, R, q_1) \in T} \{g\} \cup \wp(\lugone, R)
  \end{align*}
\end{definition}

All constraints present in the set $\wp(\lugone, R)$ contain constants
which are already present in $\lugone$. The least solution to the
above set of equations can therefore be obtained by a fixed point
computation which starts with $\Gg(q)$ set to
$\bigcup_{(q, g, R, q_1) \in T} \{g\}$ and then repeatedly updates the
weakest-preconditions. Since no new constants are generated in this
process, the fixed point computation terminates. We now have the
ingredients to define a simulation relation over configurations of a
timed automaton with diagonal constraints.

\begin{definition}[$\Aa$-simulation]
  \label{def:a-simulation}
  Let $\Aa = (Q, X, q_0, T, F)$ be a timed automaton and let the set
  of guards $\Gg(q)$ of Definition~\ref{def:local-guards} be
  associated to every state $q \in Q$. We define a relation
  $\fleq_{\Aa}$ between configurations of $\Aa$ as
  $(q, v) \as (q, v')$ if $v \gq v'$.
\end{definition}

\begin{lemma}\label{lem:a-simulation}
  The relation $\as$ is a simulation on the configurations of timed
  automaton $\Aa$.
\end{lemma}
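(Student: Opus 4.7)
The plan is to verify the simulation clauses from page~\pageref{ta-simulation} for $\as$ one by one. Because $\as$ only relates configurations sharing the same state, the state-equality requirement is built in. Reflexivity and transitivity of $\lug$ (and hence of $\as$) follow directly from the definition: it is the pointwise lift, over all $\d \ge 0$ and all $\varphi \in \Gg(q)$, of the implication relation on satisfaction of $\varphi$, which is reflexive and transitive. For the time-successor clause, given $(q,v) \as (q,v')$ and $\d \ge 0$, I would simply pick $\d' = \d$ and apply Lemma~\ref{lem:lug-over-time} with $\Gg := \Gg(q)$ to deduce $v+\d \gq v'+\d$, i.e.\ $(q,v+\d) \as (q,v'+\d)$.

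The discrete-transition clause is where the real work happens. Fix a transition $t = (q, g, R, q_1)$ with $(q,v) \xra{t} (q_1, [R]v)$. Two things must be shown: that $v' \models g$, so that $t$ also fires from $(q,v')$ and produces $(q_1, [R]v')$; and that $[R]v \gqone [R]v'$. The first is immediate from Definition~\ref{def:local-guards}, which puts $g$ into $\Gg(q)$: applying $v \gq v'$ to $\varphi = g$ at $\d = 0$ yields $v' \models g$.

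For $[R]v \gqone [R]v'$ I need, for every $\varphi \in \Gg(q_1)$ and every $\d \ge 0$, that $[R]v + \d \models \varphi$ implies $[R]v' + \d \models \varphi$. This is the main obstacle and I would dispatch it by case analysis following Definition~\ref{def:wp-resets}. The guiding principle, guaranteed by Definition~\ref{def:local-guards}, is that $\wp(\lup, R) \subseteq \Gg(q)$ for every $\varphi \in \Gg(q_1)$. In each case I would argue that $[R]u + \d \models \varphi$ is equivalent to either (a) a condition involving only $\d$ and the constant appearing in $\varphi$, independent of $u$, in which case both sides of the implication agree trivially (and indeed $\wp(\lup, R)$ is empty in precisely these cases); or (b) a condition of the form $u + \d^* \models \psi$ for some $\psi \in \wp(\lup, R)$ and some $\d^* \ge 0$, letting the hypothesis $v \gq v'$ transfer the property from $v$ to $v'$. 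The delicate sub-case is the diagonal $\varphi = x - y \lleq c$ with exactly one of $x, y$ in $R$: here $[R]u + \d \models \varphi$ collapses to a $\d$-independent inequality on the non-reset clock, so I would invoke $v \gq v'$ at $\d^* = 0$ with the single-clock constraint supplied by $\wp(\lup, R)$; when that constant would be negative the inequality is automatic from non-negativity of valuations, which is exactly why $\wp(\lup, R)$ is allowed to be empty in those degenerate sub-cases.
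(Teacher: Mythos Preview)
Your proposal is correct and follows essentially the same route as the paper: use Lemma~\ref{lem:lug-over-time} for the delay clause, use $g\in\Gg(q)$ at $\d=0$ to fire $t$ from $v'$, and appeal to the weakest-precondition set $\wp(\lugone,R)\subseteq\Gg(q)$ for the post-reset simulation. Where the paper simply says the last step ``follows directly from the definition of the weakest precondition'', you spell out the case analysis of Definition~\ref{def:wp-resets} (and you also make reflexivity and transitivity explicit, which the paper's proof omits); this extra detail is sound and does not change the argument.
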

\begin{proof} Pick two configurations such that $(q, v) \as (q,
  v')$. We need to show the two conditions required for a simulation
  as given in Page~\pageref{ta-simulation}. The first condition
  follows directly from Lemma~\ref{lem:lug-over-time}. For the second
  condition, pick a transition $t := (q, g, R, q_1)$ such that
  $(q, v) \xra{t} (q_1, v_1)$. Since $(q, v) \as (q,v')$, we have
  $v \gq v'$ by definition. Moreover, the guard $g \in \Gg(q)$, by
  Definitions~\ref{def:a-simulation} and \ref{def:local-guards}. This
  implies that $v'$ satisfies the guard $g$ (taking $\d = 0$ in
  Definition~\ref{def:lug-simulation}) thereby giving us a transition
  $(q, v') \xra{t} (q_1, v'_1)$. It remains to show $v_1 \gqone
  v'_1$. This follows directly from the definition of the weakest
  precondition $\wp(\lugone, R)$.
\end{proof}

As pointed before, Definition \ref{def:lug-simulation} gives a
declarative description of the simulation and it is unclear how to
work with it algorithmically, even when the set of constraints $\Gg$
is finite. The main issue is with the $\forall \d$ quantification,
which is not finite. We will first provide a characterization that
brings out the fact that this $\forall \d$ quantification is
irrelevant for diagonal constraints (essentially because value of
$v(x) - v(y)$ does not change with time elapse). Given a set of constraints $\Gg$,
let $\Gg^{-} \incl \Gg$ be the set of non-diagonal constraints in
$\Gg$.

\begin{proposition}
  \label{prop:characterize-g}
  $v \lug v'$ iff $v \lugdf v'$ and for all diagonal constraints
  $\varphi \in \Gg$, if $v \models \varphi$ then $v' \models \varphi$.
\end{proposition}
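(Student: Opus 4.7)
The plan is to prove both directions by unwinding the definitions, exploiting the key observation that diagonal constraints are invariant under time elapse while non-diagonal constraints genuinely depend on the elapsed amount.

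For the forward direction ($\Rightarrow$), I would assume $v \lug v'$. Since $\Gg^{-} \subseteq \Gg$, the restriction of the universal quantifier to $\Gg^{-}$ immediately gives $v \lugdf v'$. For any diagonal constraint $\varphi \in \Gg$, I would instantiate the defining condition with $\d = 0$: from $v \models \varphi$ we get $v + 0 \models \varphi$, hence $v' + 0 = v' \models \varphi$.

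For the backward direction ($\Leftarrow$), pick an arbitrary $\varphi \in \Gg$ and $\d \ge 0$ with $v + \d \models \varphi$; the task is to deduce $v' + \d \models \varphi$. I would split on the shape of $\varphi$. If $\varphi$ is non-diagonal, then $\varphi \in \Gg^{-}$ and the hypothesis $v \lugdf v'$ applies directly. If $\varphi$ is a diagonal constraint $x - y \lleq c$, I would observe the elementary identity $(v + \d)(x) - (v + \d)(y) = v(x) - v(y)$, so that $v + \d \models \varphi$ if and only if $v \models \varphi$, and similarly for $v'$. The diagonal hypothesis then transports satisfaction from $v$ to $v'$, and time-invariance lifts it back to $v' + \d$.

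There is no real obstacle here: the whole point of the proposition is the observation that diagonal constraints commute with time elapse, so quantifying over $\d$ adds no information in the diagonal case. The only mild care needed is to cleanly separate the diagonal and non-diagonal cases and to note that $\Gg^{-} \subseteq \Gg$ so that $v \lugdf v'$ is a genuine weakening extracted from $v \lug v'$ in the forward direction.
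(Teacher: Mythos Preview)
Your proposal is correct and follows essentially the same approach as the paper: the forward direction is immediate from the definition (with $\d=0$ for the diagonal part), and the backward direction hinges on the observation that $v(x)-v(y)$ is invariant under time elapse. The paper's proof is simply a terser version of what you wrote.
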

\begin{proof}
  The left-to-right implication is direct from
  Definition~\ref{def:lug-simulation}. For the right-to-left
  implication, it is sufficient to note that $v(x) - v(y)$ is
  invariant under time elapse, for every pair of clocks $x, y$.
\end{proof}

It now amounts to solving the $\forall \d$ problem for
non-diagonals. It turns out that the $\lu$ simulation achieves this,
almost. Since the $LU$ bounds cannot distinguish between guards with
$<$ and $\le$ comparisons, the $\lu$ simulation does not characterize
$v \lugdf v'$ completely (converse of Lemma~\ref{lem:lu-to-g} does not
hold). Although we are aware of the (rather
technical) modifications to $\lu$ simulation that are needed for this
characterization, we choose to use the existing $\lu$ directly as this
has already been implemented in tools. This gives us a finer
simulation than $v \lugdf v'$. Let us first define the $LU$ bounds
corresponding to a set of general constraints $\Gg$.

\begin{definition}[$LU$-bounds from $\Gg$]
    \label{def:lu-bounds}
    Let $\Gg$ be a finite set
  of constraints.  We define $LU(\Gg)$ to denote the pair of functions
  $L_\Gg$ and $U_\Gg$ defined as follows:
  \begin{align*}
    L_\Gg(x) = \begin{cases}
      -\infty & \text{ if there is no guard of the form $c \lleq x$ in
        $\Gg$} \\
      \max\{ c ~|~ c \lleq x \in \Gg\} & \text{ otherwise }
    \end{cases} \\
    U_\Gg(x) = \begin{cases} -\infty & \text{ if there is no guard of
                                       the form $x \lleq c$ in
                                       $\Gg$} \\
                                     \max\{ c ~|~ x \lleq c \in \Gg\}
                                     & \text{ otherwise }
                                   \end{cases}
  \end{align*}

\end{definition}

\begin{lemma}
  \label{lem:lu-to-g}
  For every set of constraints $\Gg$, $v \lugg v'$ implies
  $v \lugdf v'$.
\end{lemma}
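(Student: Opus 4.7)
The plan is to unfold both sides of the implication and check that the $LU$-simulation inequalities, together with the way $L_\Gg$ and $U_\Gg$ are built, are strong enough to preserve non-diagonal constraints of $\Gg$ under arbitrary time elapse. Since $\lugdf$ is $\lug$ restricted to $\Gg^{-}$, it suffices to fix an arbitrary non-diagonal $\varphi \in \Gg$ and $\d \ge 0$ with $v + \d \models \varphi$, and deduce $v' + \d \models \varphi$. I would split on whether $\varphi$ is a lower-bound or upper-bound constraint.

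\emph{Lower bound case:} $\varphi$ is $c \lleq x$. If $v(x) \le v'(x)$, then $v(x) + \d \le v'(x) + \d$, and the inequality $c \lleq v(x) + \d$ is preserved (whether $\lleq$ is $<$ or $\le$). Otherwise $v'(x) < v(x)$, and Definition~\ref{def:lu-sim} applied to $\lugg$ yields $L_\Gg(x) < v'(x)$. Since $c \lleq x$ lies in $\Gg$, Definition~\ref{def:lu-bounds} gives $c \le L_\Gg(x)$, so $c \le L_\Gg(x) < v'(x) \le v'(x) + \d$. This delivers both $c \le v'(x) + \d$ and $c < v'(x) + \d$, so $v' + \d \models \varphi$ irrespective of $\lleq$.

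\emph{Upper bound case:} $\varphi$ is $x \lleq c$. If $v'(x) \le v(x)$, then $v'(x) + \d \le v(x) + \d \lleq c$ and we are done. Otherwise $v(x) < v'(x)$, and Definition~\ref{def:lu-sim} yields $U_\Gg(x) < v(x)$. Since $x \lleq c$ lies in $\Gg$, Definition~\ref{def:lu-bounds} gives $c \le U_\Gg(x)$, so $c < v(x)$. On the other hand, $v + \d \models x \lleq c$ forces $v(x) \le v(x) + \d \le c$, contradicting $c < v(x)$. Hence this subcase is vacuous, completing the proof.

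There is no real obstacle here; the only thing to stay careful with is the interaction between strict and non-strict comparators, which is absorbed by the fact that $v'(x) > c$ together with $\d \ge 0$ yields $v'(x) + \d > c$ and therefore satisfies both $< c$ and $\le c$ style goals. It is worth flagging that the converse fails precisely because $L_\Gg$ and $U_\Gg$ collapse the distinction between strict and non-strict guards, which is the reason the paper mentions that $\lugg$ is strictly finer than $\lugdf$.
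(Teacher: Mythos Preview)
Your proof is correct and follows essentially the same argument as the paper: both rely on the fact that $c \le L_\Gg(x) < v'(x)$ (respectively $c \le U_\Gg(x) < v(x)$) forces the interesting subcase to be either immediately satisfied or vacuous. The only difference is organizational---you split first on the shape of $\varphi$ and then on the sign of $v(x)-v'(x)$, whereas the paper splits on the sign first---and you spell out the ``easy'' monotonicity subcases that the paper leaves implicit.
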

\begin{proof} Suppose $v \lugg v'$.  For a clock $x$, if
  $v(x) = v'(x)$, then trivially the conditions of $v \lugdf v'$ are
  true for constraints involving $x$. If $v'(x) < v(x)$, then by
  Definition~\ref{def:lu-sim}, we have $L_\Gg(x) < v'(x)$. Hence
  $v, v'$ (and thereby $v + \d, v'+\d$) satisfy all constraints of the
  form $c \lleq x$ from $\Gg$, giving the $\lugdf$ condition for
  $c \lleq x$ guards. Analogous reasoning works for the case when
  $v(x) < v'(x)$, giving the $\lugdf$ condition for the remaining
  guards.
\end{proof}

The above observations call for the next definition and subsequent
lemmas.

\begin{definition}[approximating $\lug$]\label{def:as-lu}
  Let $\Gg$ be a finite set of constraints. We define a relation
  $\luglu$ as follows: $v \luglu v'$ if $v \lugg v'$ and for all
  diagonal constraints $\varphi \in \Gg$, if $v \models \varphi$ then
  $v' \models \varphi$. Similarly, define $\aslu$ as
  $(q, v) \aslu (q, v')$ if $v \lugqlu v'$.
\end{definition}

\begin{lemma}
    \label{lem:aslu-is-finite-simulation}
  The relation $\aslu$ is a finite simulation on the configurations of
  $\Aa$.
\end{lemma}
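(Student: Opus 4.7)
The plan is to verify the two simulation conditions and finiteness directly. Since Lemma~\ref{lem:lu-to-g} combined with Proposition~\ref{prop:characterize-g} gives $\aslu \subseteq \as$, Lemma~\ref{lem:a-simulation} already yields a matching successor for the action clause; the real work is to sharpen the output relation from $\as$ to $\aslu$ and to prove finiteness.

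For the \emph{time-elapse} condition, I take $\d' = \d$ and show $v + \d \lugqlu v' + \d$. The $\lugg$-part is preserved because uniform time elapse preserves the sign of $v(x) - v'(x)$ and only enlarges the smaller value above any finite $L$-bound; the diagonal-preservation part is immediate because $(v + \d)(x) - (v+\d)(y) = v(x) - v(y)$ for every pair of clocks. For the \emph{action} condition, let $t = (q,g,R,q_1)$ fire at $v$ and write $v_1 = [R]v$, $v_1' = [R]v'$; the inclusion $\aslu \subseteq \as$ together with Lemma~\ref{lem:a-simulation} gives $v' \models g$ and $(q_1,v_1) \as (q_1,v_1')$, and I only need to boost this to $(q_1,v_1) \aslu (q_1,v_1')$. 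For the $LU(\Gg(q_1))$ component: on clocks in $R$ both valuations equal $0$ and the condition is trivial, while on clocks $x \notin R$ every non-diagonal bound on $x$ in $\Gg(q_1)$ reappears unchanged in $\wp(\cdot,R) \subseteq \Gg(q)$ by Definition~\ref{def:wp-resets}, so $L_{\Gg(q_1)}(x) \le L_{\Gg(q)}(x)$ and $U_{\Gg(q_1)}(x) \le U_{\Gg(q)}(x)$, and the $LU(\Gg(q))$-clause of the hypothesis transfers. For the diagonal-preservation component I perform the case analysis of Definition~\ref{def:wp-resets}: when $\{x,y\} \cap R = \emptyset$ the weakest precondition is $\varphi$ itself, and the diagonal-preservation clause of $v \lugqlu v'$ directly yields $v_1' \models \varphi$; when exactly one of $x,y$ is reset, $\wp(\varphi,R)$ is a non-diagonal bound in $\Gg(q)$ whose satisfaction at $v$ transfers to $v'$ through the matching $L$- or $U$-clause of $\lugqlu$ and then translates back to $v_1' \models \varphi$; the remaining empty-$\wp$ subcases correspond to $\varphi$ being vacuously true or false after reset.

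For \emph{finiteness}, I partition the downset $\{v : \exists v' \in Z,\ v \lugqlu v'\}$ by the set $D(v) := \{\varphi \in \Gg(q) : \varphi \text{ diagonal}, v \models \varphi\}$ of diagonals satisfied by $v$. For each candidate subset $S$ the witness $v'$ must satisfy every $\varphi \in S$, so $v'$ lies in the zone $Z_S := Z \cap \bigcap_{\varphi \in S} \sem{\varphi}$, and the residual condition reduces to $v \lugg v'$. The downset is therefore the finite union $\bigcup_S \bigl(\{v : D(v) = S\} \cap \downlu{Z_S}\bigr)$, and each piece takes only finitely many values as $Z$ varies, by Theorem~\ref{thm:lu-is-correct-diagonal-free}. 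The main obstacle I anticipate is the diagonal sub-case of the action condition: the four clauses of $\wp$ must each be linked to the appropriate clause of $\aslu$, and the nontrivial arithmetic happens in the two single-reset subcases, where a diagonal on $v_1$ translates to a non-diagonal bound on $v$ and one must check that the $LU$-bounds in $\Gg(q)$ suffice for that bound to transfer to $v'$.
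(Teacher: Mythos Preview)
Your proposal is correct and follows essentially the same approach as the paper: finiteness is obtained by partitioning according to the subset of diagonal constraints satisfied and then invoking the finiteness of $\lugg$ from Theorem~\ref{thm:lu-is-correct-diagonal-free}, which is exactly the paper's argument made explicit. For the simulation clauses you spell out the time-elapse and reset cases in detail (including the $\wp$ case analysis), whereas the paper simply states that this ``follows from definitions of $\lu$, the state based guards, and the way we have constructed the weakest pre-condition over resets''; your expanded treatment is a faithful unfolding of that sketch.
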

\begin{proof}
  The fact that $\aslu$ is a simulation follows from definitions of
  $\lu$, the state based guards, and the way we have constructed the
  weakest pre-condition over resets. We now focus on showing that it
  is finite. To do this, we will show that for each $q$, the
  simulation $\luglu$ is finite. Recall that a simulation is finite,
  if the function mapping zones $Z$ to down sets $\down{Z}$ has finite
  range.
  From Theorem~\ref{thm:lu-is-correct-diagonal-free}, simulation $\lugg$ is
  finite. Now consider the relation $v \fleq v'$ if for all diagonal
  constraints $\varphi \in \Gg$, if $v \models \varphi$, then
  $v' \models \varphi$. We will now show that this simulation $\fleq$
  is also finite. Let $\Gg^+ \incl \Gg$ be the set of diagonal
  constraints in $\Gg$; now for each subset $S \incl \Gg^+$ consider
  the set
  $\Rpos^{|X|} \cap \Land_{\varphi \in S} \varphi \cap \Land_{\varphi
    \in \Gg^+ \setminus S} \neg \varphi$. These sets partition the
  space of all valuations into atmost $2^{|\Gg^+|}$ classes. The
  downset $\down{Z}$ for a zone with respect to $\fleq$ will be the union of these
  classes that intersect with $Z$. This shows that $\fleq$ is also
  finite.  Since $\lugg$ and $\fleq$ are finite, it follows that
  $\luglu$ is also finite.
\end{proof}

The $\luglu$ relation can be extended to an algorithm for simulation
between zones, which we do in the next section. With this, we get the
following theorem.

\begin{theorem}
  Given a timed automaton $\Aa$, the state based guards of Definition
  \ref{def:local-guards} can be computed. Zone graph enumeration using
  $\aslu$ simulation is a sound, complete and terminating procedure
  for the reachability problem.
\end{theorem}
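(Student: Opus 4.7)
The plan is to prove the three claims in order: effective computability of $\{\Gg(q)\}_{q \in Q}$, and then soundness, completeness, and termination of the zone-graph enumeration that uses $\aslu$ in place of $\lu$ in the algorithm on page~\pageref{algo:simulation-reachability}. The latter three all rest on Lemma~\ref{lem:aslu-is-finite-simulation}.

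For computability, I would show that the fixed-point system of Definition~\ref{def:local-guards} is a monotone operator on a finite lattice. The critical observation is that $\wp(\lup, R)$ never introduces new constants: inspection of Definition~\ref{def:wp-resets} shows that each produced constraint reuses the constant $c$ (or $-c$) appearing in $\varphi$ itself. Hence every constraint that can ever appear in some $\Gg(q)$ has its constants drawn from the finite set of integers already present in the guards of $\Aa$, and its variables from the fixed clock set $X$. There are therefore only finitely many candidate constraints, and Kleene iteration starting from $\Gg(q) = \bigcup_{(q,g,R,q_1) \in T} \{g\}$ and closing under $\wp(\lugone, R)$ along each outgoing transition reaches the least fixed point in finitely many steps.

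For soundness and completeness, I would appeal to the standard principle that pruning a zone-graph enumeration by a sound simulation relation preserves reachability. Soundness is immediate because each explored $(q, Z)$ is computed via the exact successor $Z' = \elapse{[R](Z \land g)}$, so every configuration in $Z$ is genuinely reachable in $\Aa$; an accepting state reported by the algorithm therefore corresponds to a concrete accepting run. For completeness I would lift $\aslu$ from valuations to zones in the usual way and prove the transfer property: if $Z_1 \aslu Z_1'$ and $(q, Z_1) \tto^t (q_1, W_1)$, then $(q, Z_1') \tto^t (q_1, W_1')$ with $W_1 \aslu W_1'$. This uses exactly the closure properties built into $\Gg(q)$, namely that $g \in \Gg(q)$ (so $Z_1'$ still enables $t$ on a valuation-by-valuation basis) and $\wp(\lugone, R) \subseteq \Gg(q)$ (so the $\aslu$ relation is preserved into $\Gg(q_1)$ after the reset). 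An induction along an accepting run of $\Aa$ then produces a matching accepting path in the enumeration, even with simulation-based pruning.

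Termination is immediate from Lemma~\ref{lem:aslu-is-finite-simulation}: since $\aslu$ is finite, the collection of $\aslu$-downsets at each control state is finite; the algorithm only inserts $(q, Z_1)$ into Waiting when no previously seen $(q, Z_1')$ dominates it, so the visited zones at each state form an $\aslu$-antichain and are thus bounded in number, and the overall Passed list is bounded by this finite antichain count times $|Q|$. The main obstacle I expect is the zone-level simulation lifting in the completeness argument, where one must combine the exactness of $\tto$ with the dual closure of $\Gg(q)$ under both the outgoing guards and their weakest preconditions — precisely the design choice that makes Definition~\ref{def:local-guards} work and that distinguishes this setup from a naive global choice of $\Gg$.
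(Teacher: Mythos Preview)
Your proposal is correct and aligns with the paper's approach. The paper does not give an explicit proof of this theorem; it is stated as a direct consequence of the preceding development, namely the paragraph before Definition~\ref{def:a-simulation} (termination of the fixed-point computation because $\wp$ introduces no new constants) and Lemma~\ref{lem:aslu-is-finite-simulation} ($\aslu$ is a finite simulation), together with the standard fact that any finite simulation yields a sound, complete, and terminating zone-graph enumeration. Your write-up expands exactly these ingredients, so there is nothing substantively different to compare. One minor remark: your completeness paragraph re-derives at the zone level what Lemma~\ref{lem:a-simulation} and Lemma~\ref{lem:aslu-is-finite-simulation} already establish at the valuation level; once you know $\aslu$ is a simulation on configurations, the zone-level transfer property follows mechanically by the pointwise definition of $Z \aslu Z'$, so you can shorten that step by citing those lemmas directly rather than reopening the closure properties of $\Gg(q)$.
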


%!TEX root = main.tex
\section{Algorithm for $Z \lug Z'$}
\label{sec:algorithm-z-lug}

Fix a finite set of guards $\Gg$. Restating the definition of $\lug$
extended to zones: $Z \lug Z'$ if for all $v \in Z$ there exists a
$v' \in Z'$ such that $v \lug v'$. In this section, we will view the
characterization of $\lug$ as in Proposition~\ref{prop:characterize-g}
and give an algorithm to check $Z \lug Z'$ that uses as an oracle a
test $Z \lugdf Z'$. Using the discussion in the previous section, we
later approximate $Z \lugdf Z'$ with $Z \lugg Z'$ in the
implementation. We start with an observation following from
Proposition~\ref{prop:characterize-g}.

\begin{lemma}
  \label{lem:sufficient-condition-for-lug}
  Let $\varphi:= x - y \lleq c$ be a diagonal constraint in $\Gg$.
  Then $Z \lug Z'$ if and only if
  $Z \cap \varphi \lugp Z' \cap \varphi$ and
  $Z \cap \neg \varphi \lugp Z'$ where
  $\Gg' = \Gg \setminus \{ \varphi \}$.

  If $\Gg$ has no diagonal constraints, $Z \lug Z'$ if and only if
  $Z \lugdf Z'$.
\end{lemma}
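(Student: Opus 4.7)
My plan is to leverage Proposition~\ref{prop:characterize-g} together with the fact that any diagonal constraint $\varphi := x - y \lleq c$ is invariant under time elapse, i.e.\ $u \models \varphi$ iff $u + \d \models \varphi$ for every $\d \ge 0$. This invariance is the crucial ingredient that lets us partition $Z$ into $Z \cap \varphi$ and $Z \cap \neg\varphi$ and, on each part, upgrade a $\lugp$-witness to a $\lug$-witness without having to reason about arbitrary delays.

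For the forward direction, assume $Z \lug Z'$. Pick $v \in Z \cap \varphi$ and let $v' \in Z'$ be the witness with $v \lug v'$. Applying Definition~\ref{def:lug-simulation} to $\varphi$ with $\d = 0$ gives $v' \models \varphi$, so in fact $v' \in Z' \cap \varphi$; and since $\Gg' \subseteq \Gg$, $v \lug v'$ immediately entails $v \lugp v'$. This yields $Z \cap \varphi \lugp Z' \cap \varphi$. For $v \in Z \cap \neg\varphi$, I just take any $\lug$-witness $v' \in Z'$ and observe $v \lugp v'$ for the same inclusion reason, giving the second requirement.

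For the backward direction, take an arbitrary $v \in Z$ and split on whether $v \models \varphi$. If yes, the first hypothesis yields $v' \in Z' \cap \varphi$ with $v \lugp v'$; to promote this to $v \lug v'$ via Proposition~\ref{prop:characterize-g} I only need the extra diagonal clause $v \models \varphi \Rightarrow v' \models \varphi$, which holds by choice of $v'$. If instead $v \models \neg\varphi$, the second hypothesis supplies $v' \in Z'$ with $v \lugp v'$, and the missing diagonal clause for $\varphi$ is vacuously true because $v \not\models \varphi$. The time-elapse quantifier of Definition~\ref{def:lug-simulation} never reappears, precisely because $\varphi$ is a diagonal. The second statement of the lemma is then immediate from Proposition~\ref{prop:characterize-g}: with $\Gg$ diagonal-free, the diagonal-preservation clause is empty and the characterization collapses to $v \lugdf v'$, which lifts to $Z \lug Z' \Leftrightarrow Z \lugdf Z'$.

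The only subtle point worth flagging is the asymmetry of the decomposition: on the $\varphi$-side we must restrict \emph{both} $Z$ and $Z'$ to $\varphi$, whereas on the $\neg\varphi$-side we restrict only $Z$ and leave $Z'$ untouched. This asymmetry reflects the one-sided implication in Definition~\ref{def:lug-simulation} --- a $\varphi$-satisfying $v$ forces its witness into $\varphi$, but a $\varphi$-violating $v$ imposes no obligation on the witness. Getting that asymmetry right is really the whole content of the lemma; the rest is bookkeeping on top of Proposition~\ref{prop:characterize-g}.
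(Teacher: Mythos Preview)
Your proof is correct and follows essentially the same approach as the paper's: both directions proceed by the same case split on whether $v \models \varphi$, use the same observation that $v \lug v'$ with $v \models \varphi$ forces $v' \models \varphi$ (you invoke $\d = 0$, the paper says ``by definition''), and both upgrade $\lugp$ to $\lug$ in the converse via Proposition~\ref{prop:characterize-g} by noting that $\Gg$ and $\Gg'$ share the same non-diagonals and differ only in the single diagonal $\varphi$. Your closing remark on the asymmetry of the decomposition is a nice bit of commentary not present in the paper, but the underlying argument is the same.
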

\begin{proof}
  Suppose $Z \lug Z'$. Since $Z \cap \neg \varphi \subseteq Z$ clearly
  $Z \cap \neg \varphi \lug Z'$ and hence by definition, we get
  $Z \cap \neg \varphi \lugp Z'$.  For the other part let us choose
  $v \in Z$ such that $v \models \varphi$. As $Z \lug Z'$, there
  exists $v' \in Z'$ such that $v \lug v'$. By definition,
  $v' \models \varphi$ since $\varphi \in \Gg$ and
  $v \models \varphi$. Hence, $Z \cap \varphi \lug Z' \cap \varphi$
  and thereby $Z \cap \varphi \lugp Z' \cap \varphi$.

  Conversely, let us suppose $Z \cap \varphi \lugp Z' \cap \varphi$
  and $Z \cap \neg \varphi \lugp Z'$. Choose a $v \in Z$. Either
  $v \in Z \cap \varphi$ or $v \in Z \cap \neg \varphi$.  Suppose
  $v \in Z \cap \varphi$. Then there exists $v' \in Z' \cap \varphi$
  such that $v \lugp v'$. Firstly, note that the set of non-diagonals
  in $\Gg$ and $\Gg'$ are the same, since $\Gg$ and $\Gg'$ differ only
  by the diagonal constraint $\varphi$. Secondly, note that both
  $v \models \varphi$ and $v' \models \varphi$. These two observations
  coupled with $v \lugp v'$ give $v \lug v'$.  Suppose
  $v \in Z \cap \neg \varphi$. Then, there is a $v' \in Z'$ such that
  $v \lugp v'$. Similar to the above case, we already have
  $v \lugdf v'$. Since $v$ does not satisfy $\varphi$, the fact
  $v \lugp v'$ implies the second requirement for $v \lug v'$. Hence
  in both the cases we get $v' \in Z'$ such that $v \lug v'$, proving
  that $Z \lug Z'$.

  The last part of the Lemma follows from Proposition \ref{prop:characterize-g}.
\end{proof}

This leads to the following algorithm consisting of two mutually
recursive procedures. This algorithm is essentially an implementation
of the above lemma, with two optimizations:
\begin{itemize}[nosep]
\item we start with the non-diagonal check in Line 6 of Algorithm 1 -
  if this is already violated, then the algorithm returns false;
\item suppose $Z \lugdf Z'$, the next task is to perform the checks in
  the first statement of
  Lemma~\ref{lem:sufficient-condition-for-lug} - this is done by
  Algorithm 2; note however that when Algorithm 2 is called, we
  already have $Z \lugdf Z'$, hence $Z \cap \neg \varphi \lugdf
  Z'$. Therefore we use an optimization in Line 7 by calling Algorithm
  2 directly (as the check in Line 6 of Algorithm 1 will be
  redundant).
\end{itemize}

\begin{center}
  \begin{minipage}[t]{5cm}
    \begin{algorithm}[H]
      \label{algo:1}
      \vspace{0pt} \SetAlgoCaptionSeparator{} \DontPrintSemicolon
      \SetKwProg{Def}{check}{:}{}
      \SetKwIF{If}{ElseIf}{Else}{if}{:}{else if}{else :}{}
      \SetKwFor{For}{for}{:}{} \SetKwFor{While}{while}{:}{}

      \SetKwData{DiagConstraints}{DiagConstraints}
      \SetKwData{true}{true} \SetKwData{false}{false}

      \SetKwFunction{simulates}{$Z \lug Z'$} \bigskip \bigskip
      \Def{\simulates}{ \If{$Z = \emptyset$}{\Return \true}
        \If{$Z' = \emptyset$}{\Return \false} \If{$Z \not \lugdf Z'$}{
          \Return \false} \Return $Z \lugl Z'$ \; } \bigskip \smallskip
      \smallskip \smallskip
      \caption{}
    \end{algorithm}
  \end{minipage}
  % \hfill\vline\hfill
  \begin{minipage}[t]{6cm}
    \begin{algorithm}[H]
      \vspace{0pt}
      \label{algo:2}
      \SetAlgoCaptionSeparator{} \DontPrintSemicolon
      \SetKwProg{Def}{check}{:}{}
      \SetKwIF{If}{ElseIf}{Else}{if}{:}{else if}{else :}{}
      \SetKwFor{For}{for}{:}{} \SetKwFor{While}{while}{:}{}

      \SetKwData{DiagConstraints}{DiagConstraints}
      \SetKwData{true}{true} \SetKwData{false}{false}

      \SetKwFunction{simulates}{$Z \lugl Z'$}

      \Def{\simulates}{ \If{$\Gg$ does not contain any diagonal
          constraints}{\Return \true} pick a diagonal constraint
        $\varphi = x - y \lleq c$ from $\Gg$ \; \smallskip
        $\Gg' \longleftarrow \Gg \setminus \{\varphi\}$\; \smallskip
        \If{$Z \cap \neg\varphi \neq \emptyset$}{
            \If {$Z \cap \neg\varphi \not\lugpl Z'$}{\Return \false} }
        \Return $Z \cap \varphi \lugp Z' \cap \varphi$ }
      \caption{}
    \end{algorithm}
  \end{minipage}
\end{center}

As mentioned before, we approximate $Z \lugdf Z'$ with $Z \lugg
Z'$. In this case, each zone operation inside a call can be done in
$\Oo(|X|^2)$~\cite{Zhao:IPL:2005,Herbreteau:IandC:2016}.

\begin{theorem}
    \label{thm:algo-correct-and-terminates}
  When using $Z \lugg Z'$ in the place of $Z \lugdf Z'$, the algorithm terminates in $\Oo(2^d \cdot |X|^2)$
  where $d$ is the number of diagonal guards in $\Gg$.
\end{theorem}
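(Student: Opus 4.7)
The plan is to analyze the recursive structure of Algorithms~1 and~2 and show, by a standard divide-and-conquer recurrence, that the work is bounded by $O(2^d \cdot |X|^2)$. Termination will come for free from the fact that each recursive call is made on a strictly smaller guard set.

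First I would argue termination by induction on $d = |\Gg^+|$, the number of diagonal constraints in $\Gg$. In Algorithm~2, if $d = 0$ the procedure returns immediately. Otherwise, both recursive invocations (the call to $\lugpl$ in Line~7 and the call to $\lugp$ in Line~8) are performed with the guard set $\Gg' = \Gg \setminus \{\varphi\}$, which has $d - 1$ diagonal constraints. The call $\lugp$ in Line~8 enters Algorithm~1, which in turn invokes Algorithm~2 on $\Gg'$, still with $d-1$ diagonals. Hence every branch of the recursion strictly decreases $d$, so recursion bottoms out after at most $d$ levels.

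Next I would set up the complexity recurrence. Let $T_1(d)$ and $T_2(d)$ denote the worst-case running times of Algorithm~1 and Algorithm~2 on inputs where $\Gg$ contains $d$ diagonal constraints. The key observation, which I would invoke explicitly, is that every elementary zone operation arising in the two procedures can be done in $O(|X|^2)$ time on DBMs: emptiness test, intersection $Z \cap \varphi$ and $Z \cap \neg\varphi$ with a single (possibly diagonal) atomic constraint, and the test $Z \lugg Z'$ (using the $O(|X|^2)$ algorithm of \cite{Herbreteau:IandC:2016,Zhao:IPL:2005} that we use in place of $\lugdf$). From this we get
\begin{align*}
  T_1(d) &\le T_2(d) + O(|X|^2), \\
  T_2(d) &\le T_2(d-1) + T_1(d-1) + O(|X|^2) \quad \text{for } d \ge 1, \\
  T_2(0) &= O(1),
\end{align*}
because Algorithm~2 on a diagonal-free $\Gg$ returns immediately, whereas for $d \ge 1$ it performs a constant number of $O(|X|^2)$ zone operations and then makes the two recursive calls identified above.

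Finally I would solve the recurrence. Substituting $T_1(d-1) \le T_2(d-1) + O(|X|^2)$ into the line for $T_2(d)$ yields $T_2(d) \le 2\,T_2(d-1) + O(|X|^2)$, which unfolds to $T_2(d) = O(2^d \cdot |X|^2)$, and hence $T_1(d) = O(2^d \cdot |X|^2)$ as claimed. There is no genuine obstacle here; the only point one must be careful about is making sure that when we approximate $\lugdf$ by $\lugg$ the per-node cost really stays at $O(|X|^2)$ (so that the geometric $2^d$ factor does not get multiplied by hidden costs of manipulating the $LU$ bounds or the intersected zones). Once this is checked, the bound follows directly from the recurrence.
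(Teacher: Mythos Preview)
Your proposal is correct and follows essentially the same approach as the paper: the paper's proof simply observes that the algorithm makes two recursive calls per diagonal constraint and that the per-call work (the $\lugg$ test and intersection with a single constraint) is $\Oo(|X|^2)$, which is exactly what your recurrence formalizes. Your version is more detailed in spelling out the recurrence and its solution, but the underlying argument is the same.
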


\begin{proof}
  The algorithm makes two recursive calls per diagonal
  constraint. In each call the operations are either $Z \lugg Z'$ or
  an intersection of $Z$ with a single
  constraint. From~\cite{Herbreteau:IandC:2016}, the check $Z \lugg Z'$
  can be done in $\Oo(|X|^2)$. Intersection with a single constraint
  can also be done in $\Oo(|X|^2)$ (similar to approach in
  \cite{Zhao:IPL:2005}). This gives the complexity.
\end{proof}

From a complexity viewpoint, this algorithm is not efficient since it
makes an exponential number of calls in the number of diagonal
constraints (in fact this may not be
avoidable due to Lemma~\ref{lem:hardness}, which follows from the
NP-hardness result in \cite{Gastin:2018:CONCUR}).
Although the above algorithm does involve many calls, the internal operations
involved in each call are simple zone manipulations. Moreover, the
preliminary checks (for instance line 6 of Algorithm 1) cut short the
number of calls. This is visible in our experiments which are very good,
especially with respect to running time, as compared to other
methods. A similar hardness was shown for a different simulation in
\cite{Gastin:2018:CONCUR}, but the implementation there
indeed witnessed the hardness, as the time taken by their algorithm
was unsatisfactory.

\subsection*{Hardness of checking $Z \not\luglu Z'$}
\label{sec:hardness}
In this section we prove that the problem of checking
$Z \not\luglu Z'$ is NP-complete.  In order to do this, we will give a
polynomial time reduction from the problem of checking
$Z \not\lud Z'$, which was shown to be NP-hard
\cite{Gastin:2018:CONCUR}, to our problem. Let us first recall the
$L^dU^d$ bounds function (superscripting with $d$ to distinguish from
the diagonal free $LU$ bounds) and the simulation relation $\lud$ of
\cite{Gastin:2018:CONCUR}.

The bounds $L^d,U^d$ are functions
$L^d:X \times X \rightarrow \mathbb{Z}\cup \{\infty\}$,
$U^d: X \times X \rightarrow \mathbb{Z}\cup\{-\infty\}$ that map every
ordered pair of clocks $x,y$ either to an integer or $\infty$ or
$-\infty$.
It additionally
 satisfies that $L^d(x - 0) = U^d(0 - x) = 0$ for every clock $x$.
Given two valuations $v, v'$ and $L^dU^d$, say $v \lud v'$ if for
every pair of distinct clocks $x, y$ (including a special clock
denoting $0$) the following holds:
\begin{itemize}
\item if $v(x) - v(y) < L^d(x - y)$ then $v'(x) - v'(y) < L^d(x - y)$
\item if $L^d(x - y) \le v(x) - v(y) \le U^d(x - y)$ then
  $v'(x) - v'(y) \le v(x) - v(y)$
\end{itemize}

$Z \lud Z'$ is defined in the usual way: $Z \lud Z'$ if for every
$v \in Z$ there exists $v' \in Z'$ such that $v \lud v'$.
The following two results hold when $Z, Z'$ and the $L^dU^d$ bounds
satisfy \emph{certain} properties.
\begin{lemma}
  \label{lem:integral-valuation}
  \cite{Gastin:2018:CONCUR} Under certain assumptions on $Z, Z'$ and
  $L^dU^d$:
  \begin{enumerate}
  \item $Z \not \lud Z'$ implies there exists an integral valuation
    $v \in Z$ such that for every $v' \in Z'$, we have $v \not\lud v'$
  \item $Z \not \lud Z'$ is NP-hard.
  \end{enumerate}
\end{lemma}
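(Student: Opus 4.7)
The plan is to handle the two parts of the lemma separately, with part (1) feeding into the NP-membership half of the outer NP-completeness result while part (2) supplies hardness.

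For part (1), I would exploit the fact that every parameter appearing in the definition of $\lud$ is an integer. Unfolding $v \not\lud v'$ shows it to be a disjunction, over ordered pairs of clocks $(x,y)$, of linear conditions whose coefficients on $v, v'$ lie in $\{-1,0,1\}$ and whose constant thresholds are the integers $L^d(x-y)$ and $U^d(x-y)$. My approach is therefore a region-style refinement: partition $Z$ by cutting along the hyperplanes $v(x)-v(y) = L^d(x-y)$ and $v(x)-v(y) = U^d(x-y)$ for every pair $(x,y)$. This yields finitely many relatively open integer polyhedra (``cells''). The crucial observation is that on each cell, whether $v$ admits a simulator $v' \in Z'$ is determined by the cell, not by $v$: in the first clause of $\lud$ the truth values of $v(x)-v(y) < L^d(x-y)$ are constant on the cell, and in the second clause the $v$-dependent bound $v(x)-v(y)$ can be replaced by the supremum over the cell's closure, which is again integer. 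Each nonempty cell is an integer polyhedron (since $Z$ itself has an integral DBM), so it contains an integer point, yielding the required integral witness inside any cell from the ``non-simulated'' part of $Z$.

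For part (2), I would give a polynomial-time reduction from $3$-SAT. Given a $3$-CNF formula $\varphi$ over variables $y_1,\dots,y_n$, introduce one clock per variable plus a reference clock and some auxiliary clocks, choose $Z$ so that its integer points are exactly the Boolean assignments $v(y_i) \in \{0,1\}$, and engineer $Z'$, $L^d$, $U^d$ so that the only $v \in Z$ with no simulator $v' \in Z'$ are those encoding satisfying assignments. Each clause is encoded by a pair of clocks $(x,y)$, with the $\lud$-condition arranged to fail precisely at the literal combinations that satisfy that clause. Combined with part (1), correctness of the reduction gives $Z \not\lud Z'$ iff $\varphi$ is satisfiable.

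The main obstacle I expect is in part (1), in the second clause of $\lud$, where the bound imposed on $v'$ is itself a function of $v$: pushing $v$ to an integer corner of its cell might in principle make some previously forbidden $v' \in Z'$ newly feasible, collapsing the witness. To exclude this one must verify that the cellwise supremum of $v(x)-v(y)$ is both integer-valued and attained within $Z$, so that moving $v$ to an integer representative does not weaken the condition $v'(x)-v'(y) > v(x)-v(y)$. This is precisely the point where the ``certain assumptions on $Z, Z'$ and $L^dU^d$'' enter, and where I would expect to invest most of the technical effort; for part (2) the comparable challenge is purely combinatorial, namely forcing the per-pair structure of $\lud$ to express clause-level disjunctions, which is handled by a selector-clock encoding in the construction.
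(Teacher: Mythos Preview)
This lemma is not proved in the present paper: it is quoted verbatim from \cite{Gastin:2018:CONCUR} and used as a black box in the NP-hardness reduction for $\luglu$. There is therefore no in-paper proof to compare your proposal against; the paper simply invokes the result.

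On the substance of your sketch: for part (1), the cell decomposition you describe, cutting only at the hyperplanes $v(x)-v(y)=L^d(x-y)$ and $v(x)-v(y)=U^d(x-y)$, is too coarse for exactly the reason you flag. Within a cell where $L^d(x-y)\le v(x)-v(y)\le U^d(x-y)$, the constraint on $v'$ is $v'(x)-v'(y)\le v(x)-v(y)$, and this is genuinely $v$-dependent across the cell; replacing $v(x)-v(y)$ by the cell-supremum can turn a non-simulated $v$ into a simulated one. The standard fix is to refine further by all integer hyperplanes $v(x)-v(y)=k$ in the relevant range (a region-like partition), so that the integer part of each difference is constant on a cell; combined with integrality of $Z'$ and of the $L^dU^d$ bounds, this makes the existence of a simulator cell-invariant and lets you pick an integral representative. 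Your write-up gestures at this via the ``certain assumptions'' but does not actually close the gap. For part (2), your 3-SAT outline is in the right spirit; the construction in \cite{Gastin:2018:CONCUR} is indeed a satisfiability reduction of this flavour, though the clause encoding there is specific and not quite the selector-clock scheme you describe.
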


We will use this to show $\lug$ is NP-hard.  Given $L^dU^d$, we
construct $\Gg$ to be the set containing: for every
$x, y \in X \cup \{0\}$ the constraint $x - y < L^d(x - y)$ and the
constraints $x - y \le c$ for every
$c \in [L^d(x - y), U^d(x - y)] \cap \mathbb{Z}$.

\begin{lemma}
  \label{lem:luglu-iff-lud}
  Given $L^dU^d$, let $\Gg$ be constructed as above. Then for all
  $Z, Z'$ satisfying the conditions needed for
  Lemma~\ref{lem:integral-valuation}, $Z \not\luglu Z'$ if and only if
  $Z \not\lud Z'$.
\end{lemma}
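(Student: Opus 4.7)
The plan is to match the two constraint shapes in $\Gg$ to the two clauses defining $\lud$, being careful about the paper's convention that constraints involving the clock $0$ are non-diagonal and are therefore absorbed into the $LU$-bounds $L_\Gg, U_\Gg$ rather than into the diagonal part of $\luglu$. Using the normalizations $L^d(x - 0) = U^d(0 - x) = 0$, a direct unwinding of Definition~\ref{def:lu-bounds} applied to the chosen $\Gg$ yields $U_\Gg(z) = U^d(z - 0)$ and $L_\Gg(z) = -L^d(0 - z)$ for every $z \in X$; this identity is the bridge between the two relations.

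For $Z \not\lud Z' \Rightarrow Z \not\luglu Z'$, I would invoke Lemma~\ref{lem:integral-valuation}(1) to pick an integral $v \in Z$ such that $v \not\lud v'$ for every $v' \in Z'$. Fix such a $v'$ and a witnessing pair $(x, y) \in (X \cup \{0\})^2$ of the $\lud$-failure. If the first clause fails, i.e.\ $v(x) - v(y) < L^d(x - y)$ while $v'(x) - v'(y) \ge L^d(x - y)$, then $\varphi := (x - y < L^d(x - y))$ lies in $\Gg$ and is satisfied by $v$ but not by $v'$. If the second clause fails, integrality of $v$ gives $c := v(x) - v(y) \in \mathbb{Z} \cap [L^d(x - y), U^d(x - y)]$, so $\varphi := (x - y \le c) \in \Gg$ plays the analogous role. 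When $x, y \in X$, $\varphi$ is diagonal and the diagonal component of $\luglu$ fails directly. When $x = 0$ or $y = 0$, $\varphi$ is a non-diagonal bound on a single clock; the formulas for $L_\Gg, U_\Gg$ above let one check mechanically that the corresponding clause of $\lugg$ fails between $v$ and $v'$.

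For $Z \not\luglu Z' \Rightarrow Z \not\lud Z'$, fix a witness $v \in Z$ and an arbitrary $v' \in Z'$ with $v \not\luglu v'$. Either $v \not\lugg v'$ or a diagonal constraint in $\Gg$ witnesses the failure. In the first case, an $LU$-violating clock $z$ produces, by taking $(x, y)$ to be $(0, z)$ or $(z, 0)$ as appropriate, a direct violation of one of the clauses of $\lud$ via the $L_\Gg, U_\Gg$ translation. In the second case $\varphi$ is a diagonal between clocks $x, y \in X$: the shape $\varphi = (x - y < L^d(x - y))$ immediately violates the first clause of $\lud$; for $\varphi = (x - y \le c)$ with $v(x) - v(y) \le c < v'(x) - v'(y)$ I would split on whether $v(x) - v(y) < L^d(x - y)$ (then $v'(x) - v'(y) > c \ge L^d(x - y)$ violates clause one) or $v(x) - v(y) \ge L^d(x - y)$ (then $L^d(x - y) \le v(x) - v(y) \le c \le U^d(x - y)$ and $v'(x) - v'(y) > c \ge v(x) - v(y)$ violates clause two).

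The main obstacle is the bookkeeping around the clock $0$: the paper's convention makes $0$-involving constraints non-diagonal, so the reduction must funnel those through the $\lugg$ component of $\luglu$, whereas $\lud$ treats $0$ uniformly with other clocks. Once the $L_\Gg \leftrightarrow -L^d(0,\cdot)$ and $U_\Gg \leftrightarrow U^d(\cdot,0)$ correspondence is set up, the matching of the two constraint shapes in $\Gg$ to the two clauses of $\lud$ is routine; the crucial quantitative use of Lemma~\ref{lem:integral-valuation}(1) is to guarantee that $v(x) - v(y)$ is an integer actually appearing in the enumeration $[L^d(x - y), U^d(x - y)] \cap \mathbb{Z}$, so that a constraint $x - y \le c$ tight at $v$ is available in $\Gg$.
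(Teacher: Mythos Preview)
Your proposal is correct and follows essentially the same approach as the paper's proof: invoke the integral witness from Lemma~\ref{lem:integral-valuation} for the $(\Leftarrow)$ direction, and in both directions match the two constraint shapes in $\Gg$ to the two clauses of $\lud$, routing the $0$-involving constraints through the $\lugg$ component via the identification $U_\Gg(z)=U^d(z-0)$, $L_\Gg(z)=-L^d(0-z)$. Your treatment is in fact more explicit than the paper's, which presents the same argument more tersely and does not spell out the diagonal/non-diagonal split in the $(\Leftarrow)$ direction or the subcase $v(x)-v(y)<L^d(x-y)$ for the $\le$-shaped constraints in the $(\Rightarrow)$ direction.
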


\begin{proof}
  ($\Leftarrow$) Assume $Z \not\lud Z'$. Then from Lemma
  \ref{lem:integral-valuation} we get that there exists an integral
  valuation that serves as a witness.  We choose that integral
  valuation and call it $v$. Then there exists some pair of clocks
  $x, y \in X \cup \{0\}$ such that for every $v' \in Z'$, one of the
  following two cases holds:
  \begin{itemize}
  \item $v'(x) - v'(y) > v(x) - v(y)$ and
    $L^d(x - y) \le v(x) - v(y) \le U^d(x - y)$. In this case, from
    the construction of $\Gg$, we get that $x - y \le c \in \Gg$ where
    $c = v(x) - v(y)$.  Therefore, $v \models x - y \le c$ whereas
    $v' \not\models x - y \le c$.
  \item $v'(x) - v'(y) \ge L^d(x - y)$ and $v(x) - v(y) < L^d(x - y)$.
    Then, $x - y < L^d(x - y) \in \Gg$, and
    $v \models x - y < L^d(x - y)$ but
    $v' \not\models x - y < L^d(x - y)$.
  \end{itemize}
  This proves that, $Z \not\luglu Z'$.

  ($\Rightarrow$) Assume $Z \not\luglu Z'$. Then either
  $Z \not\lugg Z'$ or there exists a diagonal constraint
  $x - y \lleq c \in \Gg$ such that $v \models x - y \lleq c$ for some
  $v \in Z$, but for every $v' \in Z'$, $v' \not\models x - y \lleq c$.
  In the second case, from the construction of $\Gg$ we get that
  $c \in [L^d(x - y), U^d(x - y)]$.  Hence for all $v' \in Z'$,
  $v'(x) - v'(y) > v(x) - v(y)$ but
  $L^d(x - y) \le v(x) - v(y) \le U^d(x - y)$. Therefore
  $Z \not\lud Z'$.  On the other hand, if $Z \not\lugg Z'$ there is
  $v \in Z$ such that for every $v' \in Z'$ one of the following two cases
  happen:
  \begin{itemize}
  \item $v'(x) < v(x)$ and $v'(x) \le L(x)$. For $\lud$ to hold, we
     require that either both $0- v'(x)$ and $0-v(x)$ are $< L^d(0
    - x)$ or both are $> U^d(0 - x)$. Latter cannot happen as $U^d(0 -
    x) = 0$. Former cannot happen as
    $L(x) = -L^d(0 - x)$.
  \item $v(x) < v'(x)$ and $v(x) \le U(x)$. Use similar argument as
    previous case.
  \end{itemize}
  This completes the proof.
\end{proof}

Before proceeding to prove NP-completeness, we present a new
characterization of the simulation relation $\luglu$ that will help in
proving the upper bound.

\begin{lemma}
  \label{lem:characterizing-luglu}
  $Z \not \luglu Z'$ iff there exists a subset $S$ of diagonal
  constraints from $\Gg$ such that
  $(Z \cap \Land_{\varphi \in S} \varphi) ~\not \lugg~ (Z' \cap
  \Land_{\varphi \in S}\varphi)$.
\end{lemma}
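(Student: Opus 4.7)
The plan is to unfold the definition of $\luglu$ (Definition~\ref{def:as-lu}): $v \luglu v'$ holds iff $v \lugg v'$ and every diagonal $\varphi \in \Gg$ satisfied by $v$ is also satisfied by $v'$. Negating, $v \not\luglu v'$ means either $v \not\lugg v'$ or there is a diagonal $\psi \in \Gg$ such that $v \models \psi$ but $v' \not\models \psi$. Extending to zones, $Z \not\luglu Z'$ means there is some $v \in Z$ such that for every $v' \in Z'$ one of these two conditions holds. I will prove both directions of the equivalence by choosing an appropriate set $S$ of diagonals and an appropriate witness valuation.

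For the forward direction, I would start with a witness $v \in Z$ for $Z \not\luglu Z'$, and define $S := \{\varphi \in \Gg : \varphi \text{ is diagonal and } v \models \varphi\}$. By construction $v \in Z \cap \Land_{\varphi \in S}\varphi$. I claim $v$ also witnesses $(Z \cap \Land_{\varphi \in S}\varphi) \not\lugg (Z' \cap \Land_{\varphi \in S}\varphi)$. Indeed, pick any $v' \in Z' \cap \Land_{\varphi \in S}\varphi$; the "diagonal separator" case of $v \not\luglu v'$ cannot occur, since any such separator $\psi$ would belong to $S$ by definition, yet $v' \models \psi$ because $v' \in \Land_{\varphi \in S}\varphi$. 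Hence $v \not\lugg v'$, which is exactly what is required.

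For the reverse direction, suppose $S$ is a set of diagonal constraints of $\Gg$ for which $(Z \cap \Land_{\varphi \in S}\varphi) \not\lugg (Z' \cap \Land_{\varphi \in S}\varphi)$, and pick a witness $v \in Z \cap \Land_{\varphi \in S}\varphi$. I would then show that for every $v' \in Z'$, $v \not\luglu v'$, by a case split on whether $v' \in \Land_{\varphi \in S}\varphi$. If yes, then $v' \in Z' \cap \Land_{\varphi \in S}\varphi$, and by choice of $v$ we have $v \not\lugg v'$, so $v \not\luglu v'$. If no, some $\varphi \in S$ has $v' \not\models \varphi$ while $v \models \varphi$, giving a diagonal separator in $\Gg$, hence again $v \not\luglu v'$. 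So $v$ witnesses $Z \not\luglu Z'$.

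I do not expect any substantive obstacle: the lemma is really a bookkeeping argument about the partition of valuations induced by the diagonal constraints, as was already hinted at in the proof of Lemma~\ref{lem:aslu-is-finite-simulation}. The only point worth highlighting is that in the forward direction one must choose $S$ tied to the witness $v$ (rather than an arbitrary subset), so that every "diagonal separator" is forced into $S$ and eliminated after intersecting with $\Land_{\varphi \in S}\varphi$.
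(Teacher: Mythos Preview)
Your proposal is correct and follows essentially the same approach as the paper: in the forward direction you both take $S$ to be the set of diagonals of $\Gg$ satisfied by the witness $v$, and in the reverse direction you both use the witness for the $\not\lugg$ relation as the witness for $\not\luglu$. Your argument is in fact more carefully spelled out than the paper's, particularly the explicit case split in the reverse direction, but the underlying idea is identical.
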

\begin{proof}
  Suppose $Z \not \luglu Z'$. There exists $v \in Z$ such that
  $v \not \luglu v'$ for every $v' \in Z'$. Let $S$ be the set of
  diagonal constraints $\varphi \in \Gg$ such that
  $v \models \varphi$. Then,
  $v \in Z \cap \Land_{\varphi \in S} \varphi$ and hence
  $(Z \cap \Land_{\varphi \in S} \varphi) ~\not \lugg~ (Z' \cap
  \Land_{\varphi \in S} \varphi)$, as otherwise there is a $v'$ such
  that $v \lugg v'$ and $v'$ satisfies all diagonal constraints of
  $\Gg$ that $v$ satisfies.

  Suppose there exists $S$ such that
  $(Z \cap \Land_{\varphi \in S} \varphi) ~\not \lugg~ (Z' \cap
  \Land_{\varphi \in S} \varphi)$. Then, clearly
  $(Z \cap \Land_{\varphi \in S} \varphi)$ is non-empty and for
  valuations $v$ in this set, there is no $v' \in Z'$ with
  $v \luglu v'$.
\end{proof}

\begin{theorem}
    \label{lem:hardness}
  Given two zones $Z, Z'$ and a set of constraints $\Gg$, checking
  $Z \not\luglu Z'$ is NP-complete.
\end{theorem}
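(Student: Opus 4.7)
The plan is to establish membership in NP and NP-hardness separately, each relying on one of the preceding lemmas in this subsection.

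For NP-membership, the key ingredient is Lemma~\ref{lem:characterizing-luglu}, which recasts $Z \not\luglu Z'$ as the existence of a subset $S$ of diagonal constraints from $\Gg$ satisfying $(Z \cap \Land_{\varphi \in S} \varphi) \not\lugg (Z' \cap \Land_{\varphi \in S} \varphi)$. This immediately suggests the non-deterministic procedure: guess $S \incl \Gg^+$, compute the two intersected zones via successive DBM intersections (each in $\Oo(|X|^2)$), and then test $\not\lugg$ on the resulting pair in $\Oo(|X|^2)$ using the algorithm of \cite{Herbreteau:IandC:2016}. The guess has size at most $|\Gg|$ and all verification steps run in polynomial time, placing the problem in NP.

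For NP-hardness, I would reduce from checking $Z \not\lud Z'$, shown NP-hard in \cite{Gastin:2018:CONCUR}. The reduction is already spelled out: given bounds $L^dU^d$, build the set $\Gg$ exactly as described just before Lemma~\ref{lem:luglu-iff-lud} (one strict constraint $x - y < L^d(x-y)$ per ordered pair, plus $x - y \le c$ for each integer $c \in [L^d(x-y), U^d(x-y)]$). Its correctness is precisely Lemma~\ref{lem:luglu-iff-lud}, so a decision for $\not\luglu$ on the constructed instance transports back to a decision for $\not\lud$.

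The one step that needs care, and that I expect to be the main obstacle, is verifying that this reduction is polynomial-time. The set $\Gg$ contains one non-strict constraint per integer in the interval $[L^d(x-y), U^d(x-y)]$, so a priori its size is pseudo-polynomial and could blow up in the bit-encoding of arbitrary bounds. To settle this, I would inspect the instances produced by the NP-hardness reduction of \cite{Gastin:2018:CONCUR}: there the $L^dU^d$ values are of small magnitude (bounded polynomially in the size of the source SAT instance), so the resulting $\Gg$ remains polynomial-size and the reduction is indeed polynomial-time. Combining both directions yields NP-completeness.
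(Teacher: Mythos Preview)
Your proposal is correct and follows essentially the same two-part structure as the paper: NP-membership via the witness characterization of Lemma~\ref{lem:characterizing-luglu} together with the quadratic-time $\lugg$ test, and NP-hardness via the construction of $\Gg$ from $L^dU^d$ bounds combined with Lemma~\ref{lem:luglu-iff-lud}. Your explicit attention to why the reduction is polynomial-time (the interval $[L^d(x-y),U^d(x-y)]$ being small in the hard instances of \cite{Gastin:2018:CONCUR}) is a point the paper leaves implicit, so your treatment is if anything slightly more careful.
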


\begin{proof}
  From Lemma \ref{lem:characterizing-luglu} it follows that a set
  $S~(\subseteq \Gg)$ of diagonal constraints serves as a witness for
  $Z \not\luglu Z'$.  Given this set $S$,
  $(Z \cap \Land_{\varphi \in S} \varphi) \not \lugg (Z' \cap
  \Land_{\varphi \in S}\varphi)$ can be checked in quadratic time
\cite{Herbreteau:IandC:2016}.  Hence the problem of checking
  $Z \not\luglu Z'$ is in NP.

  For the hardness part, we reduce the problem of checking
  $Z \not\lud Z'$ to $Z \not\luglu Z'$.  Given, $Z, Z'$ and the
  bounds $L^d, U^d$ with the required conditions for
  Lemma~\ref{lem:integral-valuation}, we first construct $\Gg$ as we have described
  before.  Lemma \ref{lem:luglu-iff-lud} shows that $Z \not\lud Z'$
  can be aswered by checking $Z \not\luglu Z'$.  Hence, Lemma
  \ref{lem:integral-valuation} implies that checking $Z \not\luglu Z'$ is
  NP-hard.
\end{proof}

%!TEX root = main.tex
\section{Simulations for updateable timed automata}
\label{sec:updates}

In the timed automata considered so far, clocks are allowed to be reset to $0$
along transitions.  We consider in this section more sophisticated
transformations to clocks in transitions.  These are called \emph{updates}.  An
update $up: \Rpos^{|X|} \mapsto \mathbb{R}^{|X|}$ is a function mapping
non-negative $|X|$-dimensional reals (valuations) $v$ to general
$|X|$-dimensional reals (which may apriori not be valuations as the coordinates
may be negative).  The syntax of the update function $up$ is given by a set of
atomic updates $up_x$ to each $x \in X$, which are of the form $x:= c$ or $x :=
y + d$ where $c\in \Nat$, $d \in \Int$ and $y \in X$ (possibly equal
to $x$).  Note that we want $d$ to be an
integer, since we allow for decrementing clocks, and on the other hand
$c \in \Nat$ since we have non-negative clocks.  Given a valuation $v$ and an
update $up$, the valuation $up(v)$ is:
\begin{align*}
  up(v)(x) := \begin{cases}
    c & \text{ if  $up_x$ is $x := c$ }\\
    v(y) + d & \text{ if $up_x$ is $x:= y + d$}
  \end{cases}
\end{align*}
Note that in general, due to the presence of updates $x:= y +d$, the
update $up(v)$ may not yield a clock valuation. However, when it does
give a valuation, it can be used as a transformation in timed automata
transitions. We say $up(v) \ge 0$ if $up(v)(x) \ge 0$ for all clocks
$x\in X$.

An \emph{updateable timed automaton} $\Aa = (Q, X, q_0, T, F)$ is an
extension of a classic timed automaton with transitions of the form
$(q, g, up, q')$ where $up$ is an update. Semantics extend in the
natural way: delay transitions remain the same, and for action
transitions $t:= (q, g, up, q')$ we have $(q,v) \xra{t} (q', v')$ if
$v \models g$, $up(v) \ge 0$, and $v' = up(v)$. We allow the
transition only if the update results in a valuation. The reachability
problem for these automata is known to be undecidable in
general~\cite{Bouyer:2004:Updateable}. Various subclasses with
decidable reachability have been discussed in the same
paper. Decidability proofs in~\cite{Bouyer:2004:Updateable} take the
following flavour, for a given automaton $\Aa$: (1) divide the space
of all valuations into a finite number of equivalence classes called
\emph{regions} (2) to build the parameters for the equivalence, derive a set of
diophantine equations from the guards of $\Aa$; if they have a
solution then construct the quotient graph of the equivalence (called
region graph) parameterized by the obtained solution and check reachability on it; if the equations have no
solution, output that reachability for $\Aa$ cannot be
answered. Sufficient conditions on the nature of the updates that give
a solution to the diophantine equations have been tabulated
in~\cite{Bouyer:2004:Updateable}. When the automaton is diagonal-free,
the ``region-equivalence'' can be used to build an extrapolation
operation which in turn can be used in a reachability algorithm with
zones. When the automaton contains diagonals, the region-equivalence
is used to only build a region graph - no effective zone based approach has
been studied.

We use a similar idea, but we have two fundamental differences: (1) we
want to obtain reachability through the use of simulations on zones,
and (2) we build equations over sets of guards as in
Definition~\ref{def:local-guards}. The advantage of this approach is
that this allows the use of coarser simulations over zones.
Even for
automata with diagonal constraints and updates, we get a zone based
algorithm, instead of resorting to regions which are not efficient in practice.

The notion of simulations as in Page \pageref{def:lu-sim} remains the
same, now using the semantics of transitions with updates. We will
re-use the simulation relation $\lug$. We need to extend Definition
\ref{def:wp-resets} to incorporate updates. We do this below. Here is
a notation: for an update function $up$, we write $up(x)$ to be $c$ if
$up_x$ is $x := c$, and $up(x)$ to be $y + c$ if $up_x$ is $x:= y+c$.

\begin{definition}[weakest pre-condition of $\lug$ over updates]\label{def:wp-up}
  \mbox{ }

  Let $up$ be an update.

  For a constraint $\varphi$ of the form $x \lleq c$ or $c \lleq x$, we define
  $\wp(\lup, up)$ to be respectively $\{up(x) \lleq c\}$ or $\{c \lleq
  up(x)\}$ if these constraints
  are of the form $z \lleq d$ or $d \lleq z$ with $z \in X$ and $d \ge 0$, otherwise
  $\wp(\lup, up)$ is empty.

  For a constraint $\varphi: x - y \lleq c$, we define $\wp(\lup, up)$ to be
  $\{up(x) - up(y) \lleq c\}$ if this constraint is either a diagonal using different
  clocks, or it is of the form $z \lleq d$ or $d \lleq z$ with $d \ge 0$, otherwise
  $\wp(\lup, up)$ is empty.

  For a set of guards $\Gg$, we define
  $\wp(\lug, up) := \bigcup_{\varphi \in \Gg} \wp(\lup, up)$.
\end{definition}

Some examples: $\wp(x \le 5, x := x + 10)$ is empty, since $up(x)$
is $x + 10$, and the guard $x + 10 \le 5$ is not satisfiable;
$\wp(x \le 5, x := x - 10)$ is $x\leq15$,
$\wp(x \le 5, x := c)$ is empty,
$\wp(x - y \le 5, \langle x:= z_1, y:= z_2 + 10 \rangle)$ will be $z_1 - (z_2 + 10)
\le 5$, giving the constraint $z_1 - z_2 \le 15$,
$\wp(x - y \le 5, \langle x:= z+c_1, y:= z + c_2 \rangle)$ is empty,
$\wp(x - y \le 5, \langle x:= c_1, y:= z+c_2 \rangle)$ is
$c=c_1-5-c_2\leq z$ if $c\geq0$ and is empty otherwise.

\begin{definition}[State based guards]
  \label{def:local-guards-update}
  Let $\Aa = (Q, X, q_0, T, F)$ be an updateable timed automaton. We associate a
  set of constraints $\Gg(q)$ for each state $q \in Q$, which is the least
  set of constraints (for the coordinate-wise subset inclusion order) such
  that for every transition $(q, g, up, q_1)$: the guard $g$ and the
  set $\wp(\lugone, up)$ are present in $\Gg(q)$, and in addition
  constraints that allow the update to happen are also present
  in $\Gg$. The last condition is
  given by the weakest precondition of the set of constraints $\{x \ge
  0~|~ x \in X\}$. Overall, $\{\Gg(q)\}_{q \in Q}$ is the least
  solution to the following set of equations, for each $q \in Q$:
  \begin{align*}
    \Gg(q) = \bigcup_{(q, g, up, q_1) \in T} \left(~\{g\} ~\cup~
    \wp(\sqsubseteq_{ \{ x \ge 0 \mid x \in X \}}, up) ~\cup~ \wp(\lugone, up)~\right)
  \end{align*}
  The least solution $\{\Gg(q)\}_{q \in Q}$ is
  said to be finite if each $\Gg(q)$ is a finite set of constraints.
\end{definition}

In contrast to the simple reset case, the above set of equations may
not have a finite solution. Consider a
self-looping transition: $(q, x \lleq c, x:= x - 1, q)$. We require
$x \lleq c \in \Gg(q)$. Now, $\wp(x \lleq c, x:= x -1)$ is
$ x \lleq c+1$ which should be in $\Gg(q)$ according to the above
equation. Continuing this process, we need to add $x \lleq d$ for
every natural number $d \ge c$. Indeed this is consistent with the
undecidability of reachability when subtraction updates are
allowed. We deal with the subject of finite solutions to the above
equations later in this section. On the other hand,
when the above system does have a solution with finite $\Gg(q)$ at
every $q$, we can use the $\Aa$ simulation of
Definition~\ref{def:a-simulation} and its approximation $\aslu$ to get an algorithm.

\begin{proposition}
  \label{prop:simulation-for-uta}
  Let $\Aa = (Q, X, q_0, T, F)$ be an updateable timed automaton. Let
  $\{\Gg(q)\}_{q \in Q}$ with be the
  least solution to the equations given in
  Definition~\ref{def:local-guards-update}. Then, the relation $\as$ is a
  simulation on the configurations of $\Aa$.
\end{proposition}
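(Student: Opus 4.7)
The plan is to mirror the structure of the proof of Lemma~\ref{lem:a-simulation}, replacing the role of $\wp(\lug,R)$ for resets by $\wp(\lug,up)$ for updates, and additionally handling the side-condition $up(v)\ge 0$ that is now needed for an action transition to fire. So suppose $(q,v)\as (q,v')$, meaning $v \gq v'$. I will verify the two clauses of the simulation definition on page~\pageref{ta-simulation}.

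For the time-elapse clause, given $\delta\ge 0$ I would pick $\delta'=\delta$ and invoke Lemma~\ref{lem:lug-over-time} to conclude $v+\delta \gq v'+\delta$, hence $(q,v+\delta)\as (q,v'+\delta)$; the definition of $\lug$ already bakes in time-elapse, so nothing more is needed here.

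For the action clause, fix a transition $t=(q,g,up,q_1)$ with $(q,v)\xra{t}(q_1,v_1)$, so $v\models g$, $up(v)\ge 0$, and $v_1=up(v)$. I need to show that $v'\models g$, that $up(v')\ge 0$, and finally that $up(v)\gqone up(v')$. The first two are pure ``guard transfer'' facts obtained by taking $\delta=0$ in the definition of $\gq$: since $g\in \Gg(q)$ by Definition~\ref{def:local-guards-update}, $v\models g$ forces $v'\models g$; and since $\wp(\sqsubseteq_{\{x\ge 0\mid x\in X\}},up)\subseteq \Gg(q)$, the constraints on $v$ that guarantee $up(v)\ge 0$ are in $\Gg(q)$ and so are transferred to $v'$, giving $up(v')\ge 0$. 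For the third, I would take an arbitrary $\varphi_1\in \Gg(q_1)$ and arbitrary $\delta\ge 0$, assume $up(v)+\delta\models \varphi_1$, and produce $up(v')+\delta\models \varphi_1$ by chasing through the cases of Definition~\ref{def:wp-up}, using the fact that $\wp(\sqsubseteq_{\varphi_1},up)\subseteq \wp(\lugone,up)\subseteq \Gg(q)$.

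The main obstacle, and really the only nontrivial step, is establishing the key lemma implicit in Definition~\ref{def:wp-up}: for every $\varphi_1$ and every $\delta\ge 0$,
\[
v+\delta \models \wp(\sqsubseteq_{\varphi_1},up) \;\Longleftrightarrow\; up(v)+\delta \models \varphi_1,
\]
where the left-hand side means $v+\delta$ satisfies every constraint in the set, and the empty set is interpreted so that this equivalence degenerates into the constraint being trivially true or false (independent of $v$). I would prove this by a case analysis on the shape of $\varphi_1$ (non-diagonal vs.\ diagonal) and of the atomic updates $up_x$ (constant update $x:=c$ vs.\ shift update $x:=y+d$), in each case computing $(up(v)+\delta)(x)$ and rewriting $\varphi_1$ accordingly. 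The subtlety is that constant updates erase any dependence on $v$, so the resulting constraint is either a tautology, a contradiction, or of a form that is not expressible as a legal non-negative non-diagonal/diagonal guard, which is exactly why Definition~\ref{def:wp-up} throws those cases away. Once this rewriting lemma is in hand, combining it with $v \gq v'$ and $g\in \Gg(q)$ and $\wp(\lugone,up)\subseteq \Gg(q)$ yields $up(v)\gqone up(v')$, completing the action clause and hence the proof.
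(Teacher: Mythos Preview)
Your overall plan matches the paper's: mirror Lemma~\ref{lem:a-simulation}, discharge the new side-condition $up(v')\ge 0$ via $\wp(\sqsubseteq_{\{x\ge 0\mid x\in X\}},up)\subseteq\Gg(q)$, and then push $\gqone$ through the update using weakest preconditions. That structure is exactly right.

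The gap is in your displayed ``key lemma''. The equivalence
\[
v+\delta \models \wp(\sqsubseteq_{\varphi_1},up) \;\Longleftrightarrow\; up(v)+\delta \models \varphi_1
\]
is \emph{false} when $\varphi_1$ is a diagonal constraint whose weakest precondition happens to be a non-diagonal. Take $\varphi_1 = x-y \le 3$ with $up_x\colon x:=10$ and $up_y\colon y:=z+2$; then $\wp(\sqsubseteq_{\varphi_1},up)=\{5\le z\}$. The right-hand side reduces to $8-v(z)\le 3$, i.e.\ $v(z)\ge 5$, which is independent of $\delta$; the left-hand side is $v(z)+\delta\ge 5$, which is not. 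For $v(z)=3$ and $\delta=3$ the left side holds but the right side fails. This is precisely the direction you need to conclude $up(v')+\delta\models\varphi_1$ from $v'+\delta\models\wp$, so your chain breaks at that step.

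The paper avoids this by treating diagonal $\varphi_1$ separately: since diagonals are time-invariant, it suffices to have the rewriting at $\delta=0$, where the equivalence \emph{does} hold; time-invariance then recovers all $\delta$ on the $up(v')$ side. So the fix to your plan is to split the case analysis one level higher: for diagonal $\varphi_1$ argue via $\delta=0$ plus time-invariance (as in Proposition~\ref{prop:characterize-g}), and reserve your $\delta$-parametrised equivalence for non-diagonal $\varphi_1$, where it is genuinely correct.
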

\begin{proof}
  Pick two configurations such that $(q, v) \as (q, v')$. We need to
  show the two conditions required for a simulation as given in
  Page~\pageref{ta-simulation}. The proof follows along the lines of
  Lemma~\ref{lem:a-simulation}, except for one change. Consider a
  transition $t:(q, g, up, q_1)$ and let $(q, v) \xra{t} (q_1,
  v_1)$. This implies that $v \models g$ and $up(v)$ is defined, that
  is $up(v)(x) \ge 0$ for all $x$.  Since
  $g \in \Gg(q)$, we get that $v' \models g$.  Moreover, as $\Gg(q)$
  contains $\wp(\sqsubseteq_{ \{ x \ge 0 \mid x \in X \}}, up)$, and
  $v \gq v'$, we get that $up(v')$ is defined.  Now, let
  $(q, v') \xra{t} (q_1, v'_1)$.

  To show $v_1 \lugone v'_1$, where $v_1 = up(v)$ and $v'_1 =
  up(v')$. Pick a constraint $\varphi\in\Gg(q_1)$ such that $v \models
  \varphi$. Suppose $\varphi$ is of the form $x - y \lleq c$. If
  both $up(x)$ and $up(y)$ contain the same clock $z$, then in the
  difference $x - y$ the role of $z$ gets cancelled, and we get that
  $v_1(x) - v_1(y)$ equals $v'_1(x) - v'_1(y)$, and furthermore this
  difference is invariant over time elapse. Hence the $\lugone$
  property is true for $\varphi$. The next case is when $up(x)$ and
  $up(y)$ do not contain the same clock. Then, $up(x) - up(y) \lleq c$
  is a guard (diagonal or non-diagonal) which according to
  Definition~\ref{def:wp-up} is present in $\Gg$. Since $v \lug v'$,
  we get the $\lugone$ condition for $\varphi$. Similar reasoning
  follows when $\varphi$ is a non-diagonal. One however needs to use
  the fact that since $v \models \varphi$, if $\varphi$ is of the form
  $x \lleq c$ then $up(x) \lleq c$ is a valid guard.
\end{proof}

\begin{lemma}
For an updateable timed automaton $\Aa$, assume the least solution
$\{\Gg(q)\}_{q \in Q}$ to  the state-based guards equations is
finite. Then the relation $\aslu$ is a finite simulation on the
configurations of $\Aa$.
\end{lemma}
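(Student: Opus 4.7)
The plan is to adapt the two-part argument of Lemma~\ref{lem:aslu-is-finite-simulation} (the analogous statement for reset-only automata) to the update setting, by replacing the role of Lemma~\ref{lem:a-simulation} with Proposition~\ref{prop:simulation-for-uta}. Specifically, I will verify (i) that $\aslu$ is a simulation on the configurations of $\Aa$, and (ii) that $\aslu$ is finite, i.e.\ that the function mapping zones $Z$ to the $\aslu$-downset $\down{Z}$ has finite range.

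For (i), fix $(q,v) \aslu (q,v')$, meaning $v \lugqlu v'$. Delay transitions are immediate: the $\lugg$ component is preserved under common time elapse (the $LU$-bounds $L_\Gg, U_\Gg$ are chosen precisely so that $v + \d \lugg v' + \d$ whenever $v \lugg v'$), and the diagonal component is preserved because the value of $x - y$ is invariant under time elapse, so $v+\d \models \varphi$ iff $v \models \varphi$ for any diagonal $\varphi \in \Gg(q)$. For an action transition $t=(q,g,up,q_1)$ with $(q,v) \xra{t} (q_1,v_1)$: since $g \in \Gg(q)$ and $v \lugqlu v'$, both the non-diagonal atoms (via $\lugg$) and the diagonal atoms (via the second clause of $\luglu$) of $g$ are preserved, so $v' \models g$. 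The subset $\wp(\sqsubseteq_{\{x\ge 0 \mid x \in X\}}, up) \subseteq \Gg(q)$ forces $up(v') \ge 0$ exactly as in the proof of Proposition~\ref{prop:simulation-for-uta}, so the transition $(q,v') \xra{t} (q_1, v_1')$ with $v_1' = up(v')$ exists. It remains to show $v_1 \sqsubseteq_{\Gg(q_1)}^{LU} v_1'$. Each constraint $\psi \in \Gg(q_1)$ contributes its weakest precondition $\wp(\psi,up)$, which sits inside $\Gg(q)$ by Definition~\ref{def:local-guards-update}; the case analysis from the proof of Proposition~\ref{prop:simulation-for-uta} then shows that $v \lugqlu v'$ implies $v_1 \sqsubseteq_{\Gg(q_1)}^{LU} v_1'$, covering both the case where $up(x)$ and $up(y)$ share a common clock (the constraint value is preserved by $up$) and the case where they do not (the corresponding translated constraint in $\Gg(q)$ does the work).

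For (ii), I argue state-by-state exactly as in Lemma~\ref{lem:aslu-is-finite-simulation}: by assumption, each $\Gg(q)$ is finite, so Theorem~\ref{thm:lu-is-correct-diagonal-free} applies to $\lugg$ with bounds $LU(\Gg(q))$, giving a finite range of $\lugg$-downsets; the diagonal part of $\luglu$ partitions $\Rpos^{|X|}$ into at most $2^{|\Gg(q)^+|}$ classes, one per subset of $\Gg(q)^+$ that could be satisfied, so the $\luglu$-downset of a zone $Z$ is determined by a finite amount of data. Taking the (finite) disjoint union over $q \in Q$ gives finiteness of $\aslu$. The main subtlety, and the only place where updates (as opposed to resets) enter, is the weakest-precondition transfer in the action case of (i): one has to check each atomic update form $x := c$ and $x := y + d$ against both diagonal and non-diagonal constraints, verifying in each case that the constraints listed in Definition~\ref{def:wp-up} suffice to move $\luglu$ across the update step. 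This is exactly what the case distinctions in that definition were set up to handle, so the work reduces to routine bookkeeping.
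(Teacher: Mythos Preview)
The paper does not actually supply a proof for this lemma; it states the result and moves on, relying implicitly on the argument already given for Lemma~\ref{lem:aslu-is-finite-simulation} in the reset-only case together with Proposition~\ref{prop:simulation-for-uta} for the simulation part. Your proposal is precisely this implied argument spelled out: you reuse the finiteness half of Lemma~\ref{lem:aslu-is-finite-simulation} verbatim (it only needs each $\Gg(q)$ to be finite) and redo the simulation half using the weakest-precondition machinery of Definition~\ref{def:wp-up}, so your approach matches what the paper intends.
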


\subsubsection*{Finite solution to the state-based guards equations}
\label{sec:finite-solut-state}
The least solution to the equations of Definition
\ref{def:local-guards-update} can be obtained by a standard Kleene
iteration for fixed points computation. For each $i \ge 0$ and each
state $q$, define:
\begin{align*}
  \Gg^{0}(q) &= \bigcup_{(q, g, up, q') \in T}
  \{g\}\cup\wp(\sqsubseteq_{ \{ x \ge 0 \mid x \in X \}}, up)
  \\
  \Gg^{i+1}(q) &= \bigcup_{(q, g, up, q') \in T}
%   g ~\cup~\wp(\sqsubseteq_{ \{ x \ge 0 \mid x \in X \}}, up)
  \Gg^{i}(q)\cup \wp(\lugpar{\Gg^{i}(q')}, up)
\end{align*}
 The iteration stabilizes when there exists a
$k$ satisfying $\Gg^{k+1}(q) = \Gg^k(q)$ for all $q$.  At stabilization, the
values $\Gg^k(q)$ satisfy the equations of
Definition~\ref{def:local-guards-update}, and give the required $\Gg(q)$.
However, as we mentioned earlier, this iteration might not stabilize at any $k$.
We will now develop some observations that will help detect after finitely many
steps if the iteration will stabilize or not.

Suppose we colour the set $\Gg^{i+1}(q)$ to \emph{red} if either there
exists a diagonal constraint
$x - y \lleq c \in \Gg^{i+1}(q) \setminus \Gg^{i}(q)$ (a new diagonal
is added) or there exists a non-diagonal constraint $x \lleq c$ or
$c \lleq x$ in $\Gg^{i+1}(q) \setminus \Gg^{i}(q)$ such that the
constant $c$ is strictly bigger than $c'$ for respectively every
non-diagonal $x \lleq c'$ or $c' \lleq x$ in $\Gg^{i}(q)$ (a
non-diagonal with a bigger constant is added).  If this condition is
not applicable, we colour the set $\Gg^{i+1}(q)$ \emph{green}. The
next observations say that the iteration terminates iff we reach a
stage where all sets are green. Intuitively, once we reach green,
the only constraints that can be added are non-diagonals having
smaller (non-negative) constants and hence the procedure terminates.

\begin{lemma}
  \label{lem:green-green}
  Let $i > 0$. If $\Gg^{i}(q)$ is green for all $q$, then
  $\Gg^{i+1}(q)$ is green for all $q$.
\end{lemma}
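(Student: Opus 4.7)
The plan is to trace each putative new constraint at step $i+1$ back through $\wp$ to a truly new constraint at step $i$, and then invoke the greenness hypothesis to produce a dominating witness already inside $\Gg^i(q)$.

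First, I would argue that every $\psi \in \Gg^{i+1}(q) \setminus \Gg^i(q)$ arises as an element of $\wp(\lup, up)$ for some transition $(q, g, up, q') \in T$ and some $\varphi \in \Gg^i(q') \setminus \Gg^{i-1}(q')$. Indeed, if $\varphi \in \Gg^{i-1}(q')$ then $\psi \in \wp(\lugpar{\Gg^{i-1}(q')}, up)$, which by the defining equation of $\Gg^i(q)$ in the Kleene iteration is contained in $\Gg^i(q)$, contradicting the choice of $\psi$. Greenness of $\Gg^i(q')$ then tells us that $\varphi$ is a non-diagonal, of the form $x \lleq c$ or $c \lleq x$, and comes with an accompanying $\varphi' \in \Gg^{i-1}(q')$ of the same shape and constant $c' \ge c$.

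Next, I would unfold Definition~\ref{def:wp-up} to inspect $\wp(\lup, up)$: in every case its output is either empty or a single non-diagonal constraint, which already rules out the appearance of a new diagonal in $\Gg^{i+1}(q)$. Taking $\varphi = (x \lleq c)$ as the representative case (with $c \lleq x$ symmetric), $\psi$ is non-empty only when $up_x$ has the form $x := y + e$ and $c - e \ge 0$, in which case $\psi = (y \lleq c - e)$. From $c' \ge c$ I obtain $c' - e \ge c - e \ge 0$, so the weakest precondition of $\varphi'$ is the non-empty singleton $\{y \lleq c' - e\}$, which lies in $\wp(\lugpar{\Gg^{i-1}(q')}, up) \subseteq \Gg^i(q)$. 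This constraint has the same form as $\psi$ and a not-smaller constant, so $\psi$ fails to be strictly bigger than every non-diagonal of this form in $\Gg^i(q)$; hence $\Gg^{i+1}(q)$ is green at $\psi$, and since $\psi$ was arbitrary, green overall.

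The main obstacle I expect is keeping the case analysis tidy: one must verify the argument for every combination of direction ($x \lleq c$ versus $c \lleq x$) and form of $up_x$ ($x := e$ versus $x := y + e$), and check that the side condition $d \ge 0$ demanded by Definition~\ref{def:wp-up} is preserved when the constant is shifted from $c$ to $c - e$ and from $c'$ to $c' - e$. Once monotonicity of this shift in the constant is observed, the conclusion is immediate.
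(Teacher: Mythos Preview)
Your proposal is correct and follows essentially the same approach as the paper: both arguments trace a new constraint at stage $i+1$ back through $\wp$ to a new constraint $\varphi\in\Gg^i(q')\setminus\Gg^{i-1}(q')$, invoke greenness at stage $i$ to obtain a dominating constraint $\varphi'\in\Gg^{i-1}(q')$, and push $\varphi'$ forward through $\wp$ to get a witness in $\Gg^i(q)$. The only difference is organizational: the paper phrases it as a proof by contradiction (assume $\Gg^{i+1}(p)$ is red and derive that $\Gg^i(p_1)$ must then be red), whereas you give the direct form.
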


\begin{proof}
  Pick an $i>0$ such that $\Gg^i(q)$ is green for all $q$. Suppose there
  is a state $p$ such that $\Gg^{i+1}(p)$ is red. By definition, there
  is a new constraint $\varphi \in \Gg^{i+1}(p) \setminus
  \Gg^i(p)$. This means that there is a transition $(p, g, up, p_1)$
  and a constraint $\psi \in \Gg^i(p_1) \setminus \Gg^{i-1}(p_1)$
  (which is new at the $i^{th}$ stage) such that $\varphi$ is
  $\wp(\sqsubseteq_{\psi}, up)$.

  If $\varphi$ is a diagonal constraint, then $\psi$ is also a
  diagonal constraint.  This contradicts $\Gg^i(p_1)$ being green and
  hence $\varphi$ cannot be a diagonal constraint. Suppose $\varphi$
  is a non-diagonal $x \lleq c$. Then, $\psi$ is of the form
  $z \lleq d$ and the update $up$ is of the form $z := x +
  c_1$. Therefore $\varphi$ is $x \lleq d - c_1$. If $\Gg^{i-1}(p_1)$
  had a constraint $z \lleq d'$ with $d' \ge d$, then a constraint
  $x \lleq d' - c_1$ would have been added to $\Gg^i(p)$. We know that
  is not the case as $\Gg^{i+1}(p)$ has been coloured red due to
  $\varphi$, and hence $d - c_1 > d' - c_1$. This gives $d > d'$ and
  hence $\Gg^i(p_1)$ should be coloured red due to the added
  constraint $z \lleq d$. Yet again, this contradicts that
  $\Gg^i(p_1)$ is green. The case when $\varphi$ is of the form
  $c \lleq x$ is similar.
\end{proof}

\begin{lemma}
  \label{lem:K-red-no-green}
  Let $K = 1+|Q|\cdot|X|\cdot(|X|+1)$. If there is a state $p$ such that
  $\Gg^K(p)$ is red, then there is no $i$ such that $\Gg^i(q)$ is
  green for all $q$.
\end{lemma}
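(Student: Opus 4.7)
The plan is to iterate the backward analysis in the proof of Lemma~\ref{lem:green-green} and then apply a pigeonhole argument.

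Starting from $\Gg^K(p_K)$ red with witness $\varphi_K\in\Gg^K(p_K)\setminus\Gg^{K-1}(p_K)$, I would unroll the mechanism of that proof $K-1$ times to build a red chain: pairs $(p_i,\varphi_i)$ for $i=1,\ldots,K$ and transitions $t_i=(p_i,g_i,up_i,p_{i-1})\in T$ with $\varphi_i=\wp(\sqsubseteq_{\varphi_{i-1}},up_i)$ and each $\varphi_i$ a new constraint making $\Gg^i(p_i)$ red. That same proof shows that the chain is \emph{pure}: either every $\varphi_i$ is a diagonal, or every $\varphi_i$ is an upper-bound (respectively, lower-bound) non-diagonal. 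Classify each $\varphi_i$ by its \emph{type} (its shape, ignoring the constant): one of the $|X|(|X|-1)$ diagonal shapes $x-y\lleq\cdot$ or one of the $2|X|$ non-diagonal shapes, for a total of $|X|(|X|+1)$. Note also that $\wp$ preserves the strictness symbol (Definition~\ref{def:wp-up}), so the whole chain uses a single $<$ or $\le$.

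With $|Q|\cdot|X|\cdot(|X|+1)=K-1$ distinct (state, type) pairs available, pigeonhole yields indices $1\le j<j'\le K$ with $p_j=p_{j'}$ and $\text{type}(\varphi_j)=\text{type}(\varphi_{j'})$. The two constraints cannot coincide, because $\varphi_j\in\Gg^j(p_j)\subseteq\Gg^{j'-1}(p_{j'})$ while $\varphi_{j'}$ is new at step $j'$; with matching type and strictness this forces different constants, so $\delta:=c_{j'}-c_j\ne 0$, and in the non-diagonal case the red condition on $\varphi_{j'}$ further gives $\delta>0$. The transitions $t_{j+1},\ldots,t_{j'}$ form a cycle $C$ at $p_{j'}$, and composing the associated $\wp$-operations yields a type-preserving map on constraints at $p_{j'}$ that shifts the constant by $\delta$. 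By induction on $n\ge 0$ I then show that the constraint of the same type and strictness with constant $c_j+n\delta$ is added, as a new red-causing constraint, to some $\Gg^{i_n}(p_{j'})$: the bases $n=0,1$ are $\varphi_j$ and $\varphi_{j'}$, and the step simulates one more traversal of $C$ on top of the constraint produced at stage $n$. The intermediate $\wp$-operations stay non-empty: in the non-diagonal case constants only grow with $n$, so the ``$d\ge 0$'' side conditions of Definition~\ref{def:wp-up} persist; in the pure-diagonal case, each update along $C$ is already witnessed (by the original chain) to map a diagonal to a diagonal, a property that depends only on the clocks in the update and not on its constant.

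Since $\delta\ne 0$, the constraints produced at stages $0,1,2,\ldots$ have pairwise distinct constants, so red occurs at infinitely many steps. By Lemma~\ref{lem:green-green} the set $\{i:\Gg^i(q)\text{ is green for all }q\}$ is upward closed, hence its complement is downward closed; having infinitely many red steps forces this complement to be all of $\mathbb{N}^+$, which is exactly the claim of the lemma. The main obstacle is the non-emptiness step in the induction sketched above; once that is in hand, the rest is bookkeeping on top of pigeonhole and Lemma~\ref{lem:green-green}.
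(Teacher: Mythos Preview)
Your approach is essentially the paper's: build the backward red chain $(p_i,\varphi_i)_{1\le i\le K}$, apply pigeonhole on (state, support) pairs, then iterate the resulting cycle to generate infinitely many new constraints at the repeated state; the paper concludes directly, you close via Lemma~\ref{lem:green-green}, which is equivalent.

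One inaccuracy worth fixing: the chain need not be globally ``pure''. The proof of Lemma~\ref{lem:green-green} does show that if $\varphi_i$ is a diagonal then $\varphi_{i-1}$ is a diagonal, but the converse fails: a diagonal $\varphi_{i-1}$ can produce a non-diagonal $\varphi_i$ under $\wp$ (e.g.\ when one of the two clocks is updated to a constant). So the chain has the shape ``diagonals then non-diagonals'', not one or the other. This does not hurt your argument, because what you actually need is purity of the \emph{segment} between the matched indices $j<j'$: if $\varphi_j$ is non-diagonal then every later $\varphi_i$ is non-diagonal of the same upper/lower kind; if $\varphi_j$ and $\varphi_{j'}$ are both diagonal then, since non-diagonals form a suffix, every $\varphi_i$ with $j\le i\le j'$ is diagonal. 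With segment purity, your non-emptiness check for the iterated $\wp$ along the cycle goes through exactly as you sketched.
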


\begin{proof}
  Suppose there is a state $p_K$ such that $\Gg^{K}(p_K)$ is red.  This is due
  to some constraint $\varphi_K\in\Gg^{K}(p_K)\setminus\Gg^{K-1}(p_K)$.  Hence
  there is a transition $t_K=(p_{K},g_K,up_K,p_{K-1})\in T$ and a constraint
  $\varphi_{K-1}\in\Gg^{K-1}(p_{K-1})\setminus\Gg^{K-2}(p_{K-1})$ such that
  $\varphi_K\in\wp(\lugpar{\varphi_{K-1}},up_{K})$.
  We claim that $\Gg^{K-1}(p_{K-1})$ is red due to $\varphi_{K-1}$.  This is
  clear if $\varphi_{K-1}$ is a diagonal constraint.
  Assume $\varphi_{K-1}=x\lleq c$.  Then $\varphi_{K}=y\lleq d$ for some
  $d\geq0$.  Assume there is some $x\lleq c'\in\Gg^{K-1}(p_{K-1})$ with $c'\geq
  c$.  Then we have $y\lleq d'\in\wp(\lugpar{x\lleq
  c'},up_{K})\subseteq\Gg^{K}(p_K)$ with $d'\geq d$, a contradiction with the
  fact that $\Gg^{K}(p_K)$ is red due to $\varphi_K$.  The proof for the case
  $\varphi_{K-1}=c\lleq x$ is similar.

  Iterating this construction we obtain a sequence
  $(t_j=(p_{j},g_j,up_j,p_{j-1}))_{2\leq j\leq K}$ of transitions and a sequence
  of constraints $(\varphi_j)_{1\leq j\leq K}$ such that
  $\varphi_j\in\wp(\lugpar{\varphi_{j-1}},up_{j})$ for $2\leq j\leq K$ and
  $\Gg^j(p_j)$ is red due to the addition of $\varphi_j$ for $1\leq i\leq K$.
  Since $K = 1+|Q|\cdot|X|\cdot(|X|+1)$ there exist two indices $a$ and $b$ with
  $b > a$ such that $p_a = p_b=p$ and both $\varphi_a$ and $\varphi_b$ have the
  same ``support'', that is, either $\varphi_a:= x - y \lleq c$ and $\varphi_b
  := x - y \lleq c'$, or $\varphi_a:=x \lleq c$ and $\varphi_b:=x \lleq c'$, or
  $\varphi_a:=c \lleq x$ and $\varphi_b:=c' \lleq x$.

  Suppose $\varphi_a$ and $\varphi_b$ are non-diagonals.  Since $\Gg^b(p)$ is
  red due to $\varphi_b$ and $\varphi_a\in\Gg^a(p)\subseteq\Gg^b(p)$, we deduce
  that $c' > c$.  Since $p_a=p_b=p$, the weakest pre-condition computations for
  the sequence of updates $up_{a+1},\ldots,up_{b}$ can be applied starting from
  $\varphi_b\in\Gg^b(p_b)$ and results in a constraint $x\lleq c''$ or $c''\lleq
  x$ with $c''-c'>c'-c>0$.  This can be iterated infinitely often to give red
  sets at each stage.

  Assume now that $\varphi_a:= x-y\lleq c$ and $\varphi_b:= x-y\lleq c'$.
  Since $\varphi_b$ is a new diagonal constraint in $\Gg^b(p)\supseteq\Gg^a(p)$
  we get $c'\neq c$. As above, the weakest pre-condition computations for
  the sequence of updates $up_{a+1},\ldots,up_{b}$ can be applied starting from
  $\varphi_b\in\Gg^b(p_b)$ and results in a constraint $x-y\lleq c''$ with
  $c''-c'=c'-c\neq0$. This can be iterated infinitely often giving infinitely
  many new diagonal constraints.

  Both cases imply that there can be no $i$ where all $\Gg^i(q)$ are green.
\end{proof}

\begin{proposition}
    \label{prop:least-solution-is-finite-iff}
  The least solution of the local constraint equations for an
  updateable timed automaton is finite iff $\Gg^K(q)$ is green for all
  $q$ and where $K = 1+|Q|\cdot|X|\cdot(|X|+1)$.
\end{proposition}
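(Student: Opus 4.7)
The plan is to establish both directions of the equivalence using Lemmas~\ref{lem:green-green} and~\ref{lem:K-red-no-green} as the two building blocks.

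For the forward direction, I would assume the least solution $\{\Gg(q)\}_{q\in Q}$ is finite. Because the Kleene sequence $\{\Gg^i(q)\}_i$ is a monotone increasing chain contained in the finite set $\Gg(q)$, it must stabilize at some stage $k$, meaning $\Gg^{i+1}(q)=\Gg^i(q)$ for every $q$ and every $i\ge k$. Beyond stage $k$ no new constraint is added, so each $\Gg^{i+1}(q)$ has empty ``new'' part and is hence vacuously \emph{green} for every $q$. This contradicts the conclusion of Lemma~\ref{lem:K-red-no-green}, so by its contrapositive $\Gg^K(q)$ must be green for every $q$.

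For the backward direction, assume $\Gg^K(q)$ is green for every $q$. A straightforward induction starting from stage $K$ and using Lemma~\ref{lem:green-green} shows that $\Gg^i(q)$ remains green for all $q$ and all $i\ge K$. From this I would extract two invariants valid for all later stages: no brand-new diagonal constraint ever enters any $\Gg^i(q)$, so the diagonal part is frozen at its value in $\Gg^K$; and any newly added non-diagonal constraint of support $x\lleq\cdot$ or $\cdot\lleq x$ has its constant bounded by the maximum constant already present in the previous stage for the same support. Chaining these bounds, all non-diagonal constants stay below the maximum $M$ of non-diagonal constants appearing in $\Gg^K$.

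Since non-diagonal constants are non-negative integers in $[0,M]$, and the number of states, clocks, and non-diagonal supports per state is finite, only finitely many distinct constraints can ever be produced after stage $K$. Combined with the fixed finite set of diagonals, this makes each $\Gg(q)$ finite and forces the Kleene iteration to terminate. The main obstacle I anticipate is precisely this backward direction: the qualitative ``greenness'' property has to be turned into a uniform numerical bound on constants before the finite-chain-in-a-finite-lattice argument kicks in; once that bound is in place, termination and hence finiteness of the least solution are immediate.
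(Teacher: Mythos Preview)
Your proposal is correct and follows essentially the same route as the paper: the forward direction via the contrapositive of Lemma~\ref{lem:K-red-no-green} (stabilization $\Rightarrow$ eventually all green $\Rightarrow$ $\Gg^K$ cannot be red), and the backward direction via Lemma~\ref{lem:green-green} (green at $K$ $\Rightarrow$ green forever $\Rightarrow$ only finitely many new constraints). The paper's own proof is terser and simply asserts that ``by definition of green, there cannot be infinitely many new constraints added after stage $K$'', whereas you spell out the uniform bound on non-diagonal constants and the freezing of the diagonal part; this extra detail is exactly what is needed to make that assertion rigorous, so your version is if anything more complete.
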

\begin{proof}
  If $\Gg^K(q)$ is green for all $q$, then it is green for all
  $i \ge K$ (from Lemma \ref{lem:green-green}). By definition of
  green, there cannot be infinitely many new constraints added after
  stage $K$. Hence the iteration stabilizes, giving a finite solution.

  If $\Gg^K(p)$ is red for some $p$, Lemma~\ref{lem:K-red-no-green}
  gives that the iteration does not terminate and hence there is no
  finite solution.
\end{proof}

\begin{theorem}
  Let $\Aa$ be an updateable timed automaton. It is decidable whether
  the equations in Definition~\ref{def:local-guards-update} have a
  finite solution. When these equations do have a finite solution,
  zone graph enumeration using $\aslu$ is a sound, complete and
  terminating procedure for the reachability problem.
\end{theorem}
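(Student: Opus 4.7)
My plan is to prove the two claims of the theorem separately. For decidability, I would invoke Proposition~\ref{prop:least-solution-is-finite-iff}: it suffices to compute the iterates $\Gg^0(q), \Gg^1(q), \ldots, \Gg^K(q)$ for $K = 1+|Q|\cdot|X|\cdot(|X|+1)$, and check whether every $\Gg^K(q)$ is green. Each iterate is obtained by applying finitely many weakest-precondition computations to constraints already present, which is an effective operation producing finitely many new constraints, so the entire check runs in finite time and gives a decision procedure.

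Now fix an $\Aa$ for which the equations of Definition~\ref{def:local-guards-update} have a finite solution $\{\Gg(q)\}_{q\in Q}$. By the lemma immediately preceding the theorem, $\aslu$ is then a \emph{finite} simulation on the configurations of $\Aa$. The algorithm to verify is the simulation-based reachability procedure of Section~\ref{sec:preliminaries}, with the $\lu$-check replaced by the $\aslu$-check (i.e.\ $Z_1 \lugqlu Z_1'$ where $q_1$ is the target state of the successor under consideration). I would establish soundness, completeness and termination in turn.

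Soundness is immediate: the nodes added to the Waiting list are computed using the standard successor operation $\tto^t$, which faithfully mirrors the semantic transitions of $\Aa$. Hence any accepting node reported by the algorithm witnesses a concrete accepting run of $\Aa$. For completeness, I would proceed by induction on the length of an accepting run $\rho$ of $\Aa$, maintaining the invariant that after processing the prefix of length $i$, there is a node $(q_i, Z_i)$ in Passed $\cup$ Waiting with $(q_i, v_i) \aslu (q_i, v_i')$ for some $v_i' \in Z_i$. The simulation property of $\aslu$ (Proposition~\ref{prop:simulation-for-uta}, combined with the fact that $\aslu$ refines $\as$ via Lemma~\ref{lem:lu-to-g}) guarantees that whenever the algorithm discards a candidate node $(q_i, Z)$ because some $(q_i, Z')$ with $Z \lugqlu Z'$ is already present, every continuation of a run from $Z$ can be replayed from $Z'$. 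Hence a covering node is always present for each prefix, and the accepting state of $\rho$ is eventually produced.

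Termination follows from finiteness of the simulation: for each state $q$, the map sending a zone $Z$ to its $\aslu$-downset has finite range, so only finitely many pairwise $\aslu$-incomparable zones can accumulate at state $q$ in Passed $\cup$ Waiting. Standard waiting-list reasoning then shows that the enumeration halts. The main obstacle I anticipate is the completeness step: it is the only place where the simulation property of $\aslu$ (rather than merely its finiteness) is essential, and the induction must carefully handle both the update semantics---where a transition requires $up(v) \ge 0$, a condition preserved by $\aslu$ because $\wp(\sqsubseteq_{\{x \ge 0 \mid x\in X\}}, up)$ is by design included in $\Gg(q)$---and the passage from $\aslu$ to $\as$ via Lemma~\ref{lem:lu-to-g}. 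Everything else follows the same template as the diagonal-free case recalled in Section~\ref{sec:preliminaries}.
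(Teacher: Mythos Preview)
Your proposal is correct and follows exactly the route the paper intends: the theorem is stated without an explicit proof, as an immediate consequence of Proposition~\ref{prop:least-solution-is-finite-iff} (for decidability), the preceding lemma that $\aslu$ is a finite simulation when the least solution is finite, and the generic simulation-based reachability framework recalled in Section~\ref{sec:preliminaries}. Your write-up simply spells out these implicit steps, so there is nothing to add or correct.
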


All decidable classes of \cite{Bouyer:2004:Updateable} can be shown
decidable with our approach, by showing stabilization of the $\Gg(q)$
computation.

\begin{lemma}
    \label{lem:reachability-decidable-in-uta}
  Reachability is decidable in updateable timed  automata where:
  guards are non-diagonals and updates are of the form $x:= c$,
  $x:=y$,
      $x:= y + c$ where $c \ge 0$ or, guards include diagonal constraints and updates are of the
      form $x:= c$, $x := y$.
  \end{lemma}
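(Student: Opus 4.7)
The plan is to invoke the theorem immediately preceding this lemma: once each state-based guard set $\Gg(q)$ from Definition~\ref{def:local-guards-update} is finite, zone graph enumeration with $\aslu$ decides reachability. So for each of the two classes I would exhibit, by a case analysis of the weakest pre-conditions, a finite universe of constraints that contains every $\Gg^i(q)$ produced by the Kleene iteration. This is stronger than the green-at-stage-$K$ criterion of Proposition~\ref{prop:least-solution-is-finite-iff} but is more transparent for these specific updates.

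First, I would handle the seed step by showing that the contributions $\wp(\sqsubseteq_{\{x \ge 0 \mid x \in X\}}, up)$ used in $\Gg^0$ are vacuous for every update allowed in the lemma: $up(x) \ge 0$ reads as $c \ge 0$ (for $x := c$), $y \ge 0$ (for $x := y$), or $y \ge -c$ with $c \ge 0$ (for $x := y + c$), all of which are tautologies on clock valuations. So $\Gg^0$ collapses to the set of original guards of $\Aa$, and every subsequently produced constraint arises as $\wp(\sqsubseteq_{\varphi}, up)$ for some $\varphi$ already present.

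Next, I would run the case analysis for each class. For the first class (diagonal-free guards and updates $x := c$, $x := y$, $x := y + c$ with $c \ge 0$), the wp of a non-diagonal $x \lleq d$ under such an update is, by Definition~\ref{def:wp-up}, either empty, a constant inequality to be discarded, or a non-diagonal $y \lleq d'$ with $0 \le d' \le d$; the symmetric statement holds for $d \lleq x$. Crucially, no such update can introduce a diagonal, so every $\Gg^i(q)$ lies inside the finite set of non-diagonals over $X$ whose constants are bounded by the maximum constant $C$ appearing in guards of $\Aa$. For the second class (guards may be diagonal, updates restricted to $x := c$ and $x := y$), the key observation is that neither type of update changes the constant appearing in a constraint: wp of $x \lleq d$ under $x := y$ is $y \lleq d$; wp of a diagonal $x - y \lleq c$ under $\langle x := z_1,\ y := z_2 \rangle$ is $z_1 - z_2 \lleq c$; and mixed copy/constant updates yield either a non-diagonal with the same (or smaller) constant or a trivial inequality. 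Consequently every produced constraint takes the form $z \lleq d$, $d \lleq z$, or $z_1 - z_2 \lleq d$ with constant drawn from the finite set of constants of guards of $\Aa$, so again the pool is finite.

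The main obstacle, I anticipate, is the diagonal-under-copy analysis in the second class: one must rule out any way of producing a diagonal whose constant is larger than what already occurs in the guards. The critical cases are $\wp(x - y \lleq c, \langle x := z_1,\ y := z_2 \rangle)$, which must preserve $c$ exactly, and the mixed case in which one side of a diagonal is constant-updated and the other copied, which must yield a non-diagonal with constant of magnitude at most $|c|$ (as illustrated by the examples after Definition~\ref{def:wp-up}). Once these weakest pre-conditions are checked line by line, the finite-universe argument goes through in both classes, the Kleene iteration of Definition~\ref{def:local-guards-update} stabilizes, and the preceding theorem yields decidability of reachability.
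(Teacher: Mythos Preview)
Your approach is the paper's: show that the Kleene iteration for $\Gg(q)$ produces only constraints with bounded constants, hence stabilizes, and then invoke the preceding theorem. The paper's proof is a single sentence giving the bound $2M$ where $M=\max\{|d|\mid d \text{ occurs in a guard or an update of }\Aa\}$.

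Your first-class analysis is fine, but your second-class bound is wrong. You assert that the mixed case (one side of a diagonal $x-y\lleq c$ updated to a constant, the other copied) yields a non-diagonal ``with constant of magnitude at most $|c|$'', and conclude that every produced constant is ``drawn from the finite set of constants of guards of $\Aa$''. Take $\varphi\colon x-y\le 5$ with update $\langle x:=z,\ y:=10\rangle$: Definition~\ref{def:wp-up} gives $z-10\le5$, i.e., $z\le15$, whose constant $15$ exceeds $|c|=5$ and need not appear anywhere in $\Aa$. In general the mixed case produces a non-diagonal with constant $c+c_2$ or $c_1-c$, which is bounded by $2M$ but not by $|c|$ nor by the guard constants alone. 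The repair is immediate: take $2M$ as the bound for the finite universe, and observe that under the second-class updates a non-diagonal can only be copied (constant preserved) or rendered trivial, while a diagonal can only arise from copying both sides of a diagonal (constant preserved), so no further growth occurs. With this correction your argument coincides with the paper's.
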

  \begin{proof}
    With the above combination of guards and updates, each constraint added in
    the $\Gg(q)$ computation apart from the guards of the automaton is either a
    diagonal constraint with a constant appearing in the guards of the
    automaton, or is a non-diagonal constraint with a constant $c\leq2M$ where
    $M=\max\{|d|\mid d \text{ occurs in a guard or an update of the
    automaton}\}$.  Hence the $\Gg(q)$ computation terminates.
  \end{proof}

%!TEX root = main.tex
\section{Experiments}
\label{sec:experiments}
\begin{sidewaystable}[ph!]
  % \scriptsize
  \begin{center}
    \begin{tabular}{|m{6em}|>{\centering\arraybackslash}m{2em}||>{\raggedleft\arraybackslash}m{3.5em}|>{\centering\arraybackslash}m{3.5em}||>{\raggedleft\arraybackslash}m{3.5em}|>{\centering\arraybackslash}m{3.5em}||>{\raggedleft\arraybackslash}m{3.5em}|>{\centering\arraybackslash}m{3.5em}||>{\raggedleft\arraybackslash}m{3.5em}|>{\centering\arraybackslash}m{3.5em}|}
      \hline
      & & \multicolumn{4}{c||}{$\Aa$ : contains diagonals} & \multicolumn{4}{c|}{$\Aa_{df}$ : diagonal-free \emph{euivalent} of $\Aa$} \\
      \hline
      \hline
      & & \multicolumn{2}{c||}{TChecker with $\luglu$} & \multicolumn{2}{c||}{UPPAAL} & \multicolumn{2}{c||}{ UPPAAL} & \multicolumn{2}{c|}{ TChecker} \\
      \hline
      Model & $\#\mathcal{D}$ &Time & Nodes count & Time & Nodes count & Time & Nodes count & Time & Nodes count \\
      \hline
      \hline
       Cex 1 & 2 & 0.004 & 7 & 0.001 & 26 & 0.001 & 17 & 0.006 & 17 \\
      \hline
      Cex 2 & 4 & 0.047 & 241 & 0.026 & 2180 & 0.005 & 1039 & 0.067 & 1039 \\
      \hline
      Cex 3 & 6 & 7.399 & 7111 & 111.168 & 182394 & 1.028 & 60982 & 40.092 & 60982 \\
      \hline
      Cex 4 & 8 & 857.662 & 185209 & timeout & - & 734.543 & 3447119 & timeout & - \\
      \hline

      \hline
      Fischer 3 & 3 & 0.007 & 104 & 0.087 & 4272 & 0.001 & 268 & 0.013 & 268 \\
      \hline
      Fischer 4 & 4 & 0.032 & 452 & 307.836 & 357687 & 0.009 & 1815 & 0.100 & 1815 \\
      \hline
      Fischer 5 & 5 & 0.257 & 1842 & timeout & - & 0.116 & 12511 & 1.856 & 12511 \\
      \hline
      Fischer 7 & 7 & 15.032 & 26812 & timeout & - & 174.560 & 693603 & timeout & - \\
      \hline

    % \hline
    % Job Shop 2 & 8 & 0.011 & 38 & 0.015 & 448 & 0.000 & 70 \\
    \hline
    Job Shop 3 & 12 & 0.420 & 278 & 23.093 & 31711 & 0.003 & 845 & 0.312 & 845 \\
    \hline
    Job Shop 5 & 20 & 285.421 & 10592 & timeout & - & 4.633 & 179607 & 150.811 & 179607 \\
    \hline
    Job Shop 7 & 28 & timeout & - & timeout & - & timeout & - & timeout & - \\
    \hline

      \hline
      \hline
      Job Shop 3 & 12 & 0.019 & 38 & 22.435 & 31607 & 0.004 & 839 & 0.013 & 29 \\
      \hline
      Job Shop 5 & 20 & 0.279 & 98 & timeout & - & 4.552 & 179597 & 0.012 & 67 \\
      \hline
      Job Shop 7 & 28 & 1.754 & 192 & timeout & - & timeout & - & 0.036 & 121 \\
      \hline
      Job Shop 9 & 36 & 7.040 & 318 & timeout  & - & timeout & - & 0.056 & 191 \\
      \hline
    \end{tabular}
  \end{center}
  \caption{Experiments: the column $\# \mathcal{D}$ gives the number of diagonal
    constraints. There are four methods reported in the table.
    First algorithm (TChecker with $\luglu$) is our implementation of
    simulation based reachability using state based guards and
    $\luglu$ simulation; second algorithm (UPPAAL) is the result of running the
    model ($\Aa$) with diagonal constraints in UPPAAL; third algorithm (UPPAAL) gives the
    result of running an equivalent diagonal-free automaton ($\Aa_{df}$) in UPPAAL and the fourth algorithm (TChecker) gives the result of running the same (diagonal-free) model ($\Aa_{df}$) in TChecker. Experiments were run on a
    MacBook Pro laptop with 2.3 GHz Intel core i5 processor, and 8 GB
    RAM. Time is reported in seconds. We set a timeout of 15 minutes.}
  \label{tab:experiments}
\end{sidewaystable}

We have implemented the reachability algorithm for timed automata with
diagonal constraints (and only resets as updates) based on the
simulation approach (Page \pageref{algo:simulation-reachability})
using the $\aslu$ simulation (Definition \ref{def:as-lu}) for pruning
zones. The algorithm for $Z \luglu Z'$ comes from Section
\ref{sec:algorithm-z-lug}.
Our implementation is over a prototype tool
TChecker \cite{tchecker}.
The core reachability algorithm based on the
simulation approach is already implemented in TChecker for timed
automata without diagonal constraints. We have added the state based
guards computation, and the $\luglu$ simulation algorithm.
In the rest of this section, we explain the timed automata models and
the experimental results (Table~\ref{tab:experiments}).
Experiments are reported in Table~\ref{tab:experiments}. We take
model \emph{Cex} from~\cite{Bouyer:2003:STACS,Reynier:toolreport} and \emph{Fischer}
from~\cite{Reynier:toolreport}. We are not aware of any other
``standard'' benchmarks containing diagonal constraints. In addition
to these two models, we introduce the following new benchmark.

\paragraph*{Job-shop scheduling with dependent tasks.} This is
an extension of the job-shop scheduling using (diagonal-free) timed
automata~\cite{Maler:2006:Scheduling}. The benchmark models the
following situation: there are $n$ jobs $J_1, J_2, \dots, J_n$ and $k$
machines $m_1, m_2, \dots, m_k$. Each job $J_i$ consists of two tasks
given by triples $T_i^1:= (m_i^1, a_i^1, b_i^1)$ and
$T_i^2= (m_i^1, a_i^1, b_i^1)$ where $m, a, b$ (with appropriate
indices) denote respectively the machine on which the task needs
to be executed and $[a, b]$ is an interval such that the time needed
to execute the task is in this interval. In addition to this, each job has
an overall deadline $D_i$. The question now is if the tasks can be
scheduled so that all jobs finish within their deadline. This problem
can be modeled using diagonal-free automata. We now add a further
constraint (excluding the indices for clarity): for a job $J$ with
tasks $T^1$ and $T^2$, let $t^1$ and $t^2$ be the execution times. We
know that $t^1 \in [a^1, b^1]$ and $t_2 \in [a^2, b^2]$. Now, we
have a dependency: if $t^1 \ge c^1$, then $t^2 \le c^2$ and if
$t^1 \le d^1$ then $t^2 \ge d^2$ for suitable constants $c^1, c^2,
d^1, d^2$. This says that if the first task is
executed for a longer time, the second task needs a shorter time to
finish and vice versa. This kind of a dependency has a natural
modeling using diagonal constraints. We illustrate in
Figure~\ref{fig:job} a part of the UPPAAL-style automaton for a job
$J$ capturing the timing requirements - we make use of the feature of
committed locations in UPPAAL~\cite{Bengtsson:Springer:2004}. Time is not allowed to elapse in this state, and in a
product the components with committed locations execute first. This
modeling template of the job scheduling problem with dependent tasks
can easily be extended when jobs have more tasks and there are similar
dependencies between them. In addition to this automaton, we model the
mutual exclusion between machines (by use of extra boolean variables): each machine can have atmost
one task at a time. We vary the number of jobs and get different timed
automata with diagonal constraints. Note that although this model is
acyclic, there are ``cross simulations'' which help in pruning the
search: the same state $q$ can be reached by multiple paths and
simulations help in cutting out new searches.

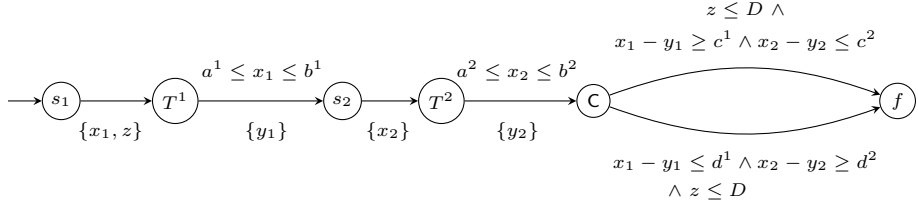
\begin{figure}[t]
  \centering
  \begin{tikzpicture}[state/.style={draw, circle, inner sep=2pt,
      minimum size = 4mm}]
    \begin{scope}[every node/.style={state}]
      \node (0) at (-1.5,0) {\scriptsize $s_1$}; \node (1) at (0,0)
      {\scriptsize $T^1$}; \node (11) at (2.2,0) {\scriptsize $s_2$}; \node (2) at (3.5,0) {\scriptsize $T^2$};
      \node (3) at (5.5,0) {\scriptsize $\mathsf{C}$}; \node (4) at
      (9.5,0) {\scriptsize $f$};
    \end{scope}
    \begin{scope}[->, >=stealth]
      \draw (-2.2,0) to (0); \draw (0) to (1); \draw (1) to (11); \draw (11) to
      (2); \draw (2) to (3); \draw (3) to [bend left=20] (4); \draw (3) to [bend
      right=20] (4);
    \end{scope}
    \node at (-0.85, -0.4) {\scriptsize $\{x_1, z\}$}; \node at (1.15,
    0.4) {\scriptsize $a^1 \le x_1 \le b^1$}; \node at (1.2,
    -0.4) {\scriptsize $\{y_1\}$}; \node at (2.8, -0.4)
    {\scriptsize $\{x_2\}$}; \node at (4.5, 0.4) {\scriptsize
      $a^2 \le x_2 \le b^2$}; \node at (4.5, -0.4) {\scriptsize
      $\{y_2\}$}; \node at (7.5, 0.8) {\scriptsize
      $x_1 - y_1 \ge c^1 \land x_2 - y_2 \le c^2$}; \node at (7.5,
    1.2) {\scriptsize $z \le D~\land$ }; \node at (7.5, -0.8)
    {\scriptsize $x_1 - y_1 \le d^1 \land x_2 - y_2 \ge d^2 $}; \node
    at (7, -1.2) {\scriptsize $\land~z \le D $ };
  \end{tikzpicture}
  \caption{UPPAAL model capturing the timing requirements in a
    scheduling problem. The state marked $\mathsf{C}$ is called a
    committed location, which is a convenient modeling feature in
    UPPAAL. Time is not allowed to elapse in such a state.}
  \label{fig:job}
\end{figure}
Each model considered
above is a product of a number of $k$ timed automata. In the table we
write the name of the model and the number $k$ of automata involved in
the product. Along with this, we report the number of diagonal
constraints in each of them.

\medskip\noindent
\emph{Experimental results.} We considered four algorithms as
mentioned in the caption of Table~\ref{tab:experiments}. Under each
algorithm, we report on the number of zones enumerated and the time
taken. The first algorithm gives a huge gain over the second algorithm
(upto four orders of magnitude in the number of nodes, and even
better for time) and gives a less marked, but still significant, gain
over the third and fourth algorithms. We provide a brief explanation of this
phenomenon. The performance of the reachability algorithm is dependent
on three factors:
\begin{itemize}[nosep]
\item the parameters on which the extrapolation or simulation is based
  on: the maximum constant based $M$-simulations which use the maximum
  constant appearing in the guards, versus the $LU$-simulations which
  make a distinction between lower bound guards $c \lleq x$ and upper
  bound guards $x \lleq c$ (refer to \cite{Behrmann:STTT:2006} for the
  exact definitions of extrapolations based on these parameters, and
  \cite{Herbreteau:IandC:2016} for simulations based on these
  parameters); $LU$-simulations are superior to $M$-simulations.

\item the way the parameters are computed: global parameters which
  associate a bound to each clock versus the more local state based
  parameters as in Definition~\ref{def:local-guards} which associate a
  set of bounds functions to each state~\cite{Behrmann:TACAS:2003};
  local bounds are superior to global bounds.

\item when diagonal constraints are present, whether zones get split
  or not: each time a zone gets split, new enumerations start from
  each of the new nodes; clearly, a no-splitting-of-zones approach is
  superior to zone splitting.
\end{itemize}

Algorithm of column $1$ uses the superior heuristic in all the
three optimizations above. The no-splitting-of-zones was possible
thanks to our simulation approach, which temporarily splits
zones for checking $Z \luglu Z'$, but never starts a new exploration from any
of the split nodes. The algorithm of column $2$, which is implemented
in the current version UPPAAL 4.1 uses the inferior heuristic in all
the three above. In particular, it is not clear how the extrapolation
approach can avoid the zone splitting in an efficient manner. The
superiority of our approach gets amplified (by multiplicative factors)
when we consider bigger products with many more diagonals.
In the third algorithm, we give a
diagonal free equivalent of the original model (c.f. Theorem~\ref{thm:diagonal-to-diagonalfree}) and use the UPPAAL
engine for diagonal free timed automata.
The UPPAAL diagonal free engine is highly
optimized, and makes use of the superior heuristics in the first two
optimizations mentioned above (the third is not applicable now as it
is a diagonal free automaton). The third algorithm can be considered as a
good approximation of the zone splitting approach to diagonal
constraints using $LU$-abstractions and local guards.

The second and
the third
methods are the only possibilities of verifying timed models coming
with diagonal constraints in UPPAAL. Both these approaches are in principle prone
to a $2^{\# \mathcal{D}}$ blowup compared to the first approach, where
$\mathcal{D}$ gives the number of diagonal constraints. The table
shows that a good extent of this blowup indeed happens.
The UPPAAL diagonal free engine uses ``minimal constraint
systems''~\cite{Bengtsson:Springer:2004} for representing zones,
whereas in our implementation we use Difference Bound Matrices
(DBMs)~\cite{Dill:1990:DBM}. This explains why even with more nodes
visited, UPPAAL performs better in terms of time in some cases.  We
have not included in the table the comparison with two other works
dealing with the same problem: the refined diagonal free conversion
\cite{Reynier:toolreport} and the $LU$ simulation extended with $LU^d$
simulation for diagonals \cite{Gastin:2018:CONCUR}. However, our
results are better than the tables reported in these papers.

\section{Conclusion}
\label{sec:conclusion}

We have proposed a new algorithm for handling diagonal constraints in
timed automata, and extended it to automata with general updates. Our
approach is based on a simulation relation between zones.  From our
preliminary experiments, we can infer that the use of simulations is
indispensable in the presence of diagonal constraints as zone-splitting
can be avoided. Moreover, the fact that the simulation approach stores
the actual zones (as opposed to abstracted zones in the extrapolation
approach) has enabled optimizations for diagonal-free automata that work with dynamically
changing simulation parameters ($LU$-bounds), which are learnt as and
when the the zones are expanded \cite{Herbreteau:2013:CAV}. Working
with actual zones is also convenient for finding cost-optimal paths in
priced timed automata~\cite{Bouyer::2016:CAV}. Investigating these in
the presence of diagonal constraints is part of future work. Currently, we have not
implemented our approach for updateable timed automata. This will also
be part of our future work.

Working directly with a model containing
diagonal constraints could be convenient (both during modeling, and
during extraction of diagnostic traces) and can also potentially give
a smaller automaton to begin with. We believe that our experiments
provide hope that diagonal constraints can indeed be used.

% \newpage

\bibliographystyle{plainurl}
\bibliography{ms}

\begin{thebibliography}{10}

\bibitem{Maler:2006:Scheduling}
Yasmina Abdeddaim, Eugene Asarin, and Oded Maler.
\newblock Scheduling with timed automata.
\newblock {\em Theoretical Computer Science}, 354(2):272--300, 2006.

\bibitem{Alur:TCS:1994}
Rajeev Alur and David~L. Dill.
\newblock A theory of timed automata.
\newblock {\em Theoretical Computer Science}, 126(2):183--235, 1994.

\bibitem{Behrmann:TACAS:2003}
Gerd Behrmann, Patricia Bouyer, Emmanuel Fleury, and Kim~G. Larsen.
\newblock {Static guard analysis in timed automata verification.}
\newblock In {\em Tools and Algorithms for the Construction and Analysis of
  Systems (TACAS)}, volume 2619 of {\em Lecture Notes in Computer Science},
  pages 254--270. Springer, 2003.

\bibitem{Behrmann:STTT:2006}
Gerd Behrmann, Patricia Bouyer, Kim~G. Larsen, and Radek Pel{\'a}nek.
\newblock Lower and upper bounds in zone-based abstractions of timed automata.
\newblock {\em International Journal on Software Tools for Technology
  Transfer}, 8(3):204--215, 2006.

\bibitem{Bengtsson:Springer:2004}
Johan Bengtsson and Wang Yi.
\newblock Timed automata: Semantics, algorithms and tools.
\newblock In {\em Lectures on Concurrency and Petri Nets, Advances in Petri
  Nets}, volume 3098 of {\em Lecture Notes in Computer Science}, pages 87--124.
  Springer, 2004.

\bibitem{Berard:1998:FundInf}
B{\'e}atrice B{\'e}rard, Antoine Petit, Volker Diekert, and Paul Gastin.
\newblock Characterization of the expressive power of silent transitions in
  timed automata.
\newblock {\em Fundamenta Informaticae}, 36(2,3):145--182, 1998.

\bibitem{Bouyer:2003:STACS}
Patricia Bouyer.
\newblock Untameable timed automata!
\newblock In {\em Symposium on Theoretical Aspects of Computer Science
  (STACS)}, volume 2607 of {\em Lecture Notes in Computer Science}, pages
  620--631. Springer, 2003.

\bibitem{Bouyer:2004:forwardanalysis}
Patricia Bouyer.
\newblock Forward analysis of updatable timed automata.
\newblock {\em Formal Methods in System Design}, 24(3):281--320, 2004.

\bibitem{Bouyer:2005:Conciseness}
Patricia Bouyer and Fabrice Chevalier.
\newblock On conciseness of extensions of timed automata.
\newblock {\em Journal of Automata, Languages and Combinatorics},
  10(4):393--405, 2005.

\bibitem{Bouyer::2016:CAV}
Patricia Bouyer, Maximilien Colange, and Nicolas Markey.
\newblock Symbolic optimal reachability in weighted timed automata.
\newblock In {\em Computer Aided Verification (CAV)}, volume 9779 of {\em
  Lecture Notes in Computer Science}, pages 513--530. Springer, 2016.

\bibitem{Bouyer:2004:Updateable}
Patricia Bouyer, Catherine Dufourd, Emmanuel Fleury, and Antoine Petit.
\newblock Updatable timed automata.
\newblock {\em Theoretical Computer Science}, 321(2-3):291--345, 2004.

\bibitem{Bouyer:2005:diagonal-refinement}
Patricia Bouyer, Fran{\c{c}}ois Laroussinie, and Pierre-Alain Reynier.
\newblock Diagonal constraints in timed automata: Forward analysis of timed
  systems.
\newblock In {\em Formal Modeling and Analysis of Timed Systems (FORMATS)},
  volume 3829 of {\em Lecture Notes in Computer Science}, pages 112--126.
  Springer, 2005.

\bibitem{Daws:TACAS:1998}
Conrado Daws and Stavros Tripakis.
\newblock Model checking of real-time reachability properties using
  abstractions.
\newblock In {\em Tools and Algorithms for the Construction and Analysis of
  Systems (TACAS)}, volume 1384 of {\em Lecture Notes in Computer Science},
  pages 313--329. Springer, 1998.

\bibitem{Dill:1990:DBM}
David~L. Dill.
\newblock Timing assumptions and verification of finite-state concurrent
  systems.
\newblock In {\em Automatic Verification Methods for Finite State Systems},
  volume 407 of {\em Lecture Notes in Computer Science}, pages 197--212.
  Springer, 1989.

\bibitem{Ferrere:2018:FM}
Thomas Ferr{\`{e}}re.
\newblock The compound interest in relaxing punctuality.
\newblock In {\em Formal Methods (FM)}, volume 10951 of {\em Lecture Notes in
  Computer Science}, pages 147--164. Springer, 2018.

\bibitem{DBLP:conf/tacas/FersmanPY02}
Elena Fersman, Paul Pettersson, and Wang Yi.
\newblock Timed automata with asynchronous processes: Schedulability and
  decidability.
\newblock In {\em Tools and Algorithms for the Construction and Analysis of
  Systems (TACAS)}, volume 2280 of {\em Lecture Notes in Computer Science},
  pages 67--82. Springer, 2002.

\bibitem{Gastin:2018:CONCUR}
Paul Gastin, Sayan Mukherjee, and B.~Srivathsan.
\newblock Reachability in timed automata with diagonal constraints.
\newblock In {\em International Conference on Concurrency Theory (CONCUR)},
  volume 118 of {\em LIPIcs}, pages 28:1--28:17. Schloss Dagstuhl -
  Leibniz-Zentrum fuer Informatik, 2018.

\bibitem{tchecker}
Fr{\'e}d{\'e}ric Herbreteau.
\newblock {TC}hecker.
\newblock \url{http://www.labri.fr/perso/herbrete/tchecker/index.html}.

\bibitem{Herbreteau:2013:CAV}
Fr{\'e}d{\'e}ric Herbreteau, B.~Srivathsan, and Igor Walukiewicz.
\newblock Lazy abstractions for timed automata.
\newblock In {\em Computer Aided Verification (CAV)}, volume 8044 of {\em
  Lecture Notes in Computer Science}, pages 990--1005. Springer, 2013.

\bibitem{Herbreteau:IandC:2016}
Fr{\'{e}}d{\'{e}}ric Herbreteau, B.~Srivathsan, and Igor Walukiewicz.
\newblock Better abstractions for timed automata.
\newblock {\em Information and Computation}, 251:67--90, 2016.

\bibitem{Herbreteau:2015:FORMATS}
Fr{\'{e}}d{\'{e}}ric Herbreteau and Thanh{-}Tung Tran.
\newblock Improving search order for reachability testing in timed automata.
\newblock In {\em Formal Modeling and Analysis of Timed Systems (FORMATS)},
  volume 9268 of {\em Lecture Notes in Computer Science}, pages 124--139.
  Springer, 2015.

\bibitem{Hsi-Ming:2018:Logics}
Hsi{-}Ming Ho.
\newblock Revisiting timed logics with automata modalities.
\newblock {\em CoRR}, abs/1812.10146, 2018.
\newblock URL: \url{http://arxiv.org/abs/1812.10146}.

\bibitem{LTSMin}
Gijs Kant, Alfons Laarman, Jeroen Meijer, Jaco van~de Pol, Stefan Blom, and Tom
  van Dijk.
\newblock Ltsmin: High-performance language-independent model checking.
\newblock In {\em Tools and Algorithms for the Construction and Analysis of
  Systems (TACAS)}, volume 9035 of {\em Lecture Notes in Computer Science},
  pages 692--707. Springer, 2015.

\bibitem{Larsen:1997:UPPAAL}
Kim~Guldstrand Larsen, Paul Pettersson, and Wang Yi.
\newblock {UPPAAL} in a nutshell.
\newblock {\em International Journal on Software Tools for Technology
  Transfer}, 1(1-2):134--152, 1997.

\bibitem{Reynier:toolreport}
Pierre-Alain Reynier.
\newblock Diagonal constraints handled efficiently in uppaal.
\newblock In {\em Research report LSV-07-02}, Laboratoire Sp{\'e}cification et
  V{\'e}rification. ENS Cachan, France, 2007.

\bibitem{Wang:2004:RED}
Farn Wang.
\newblock Efficient verification of timed automata with bdd-like data
  structures.
\newblock {\em International Journal on Software Tools for Technology
  Transfer}, 6(1):77--97, 2004.

\bibitem{KRONOS}
Sergio Yovine.
\newblock Kronos: A verification tool for real-time systems. ({K}ronos user's
  manual release 2.2).
\newblock {\em International Journal on Software Tools for Technology
  Transfer}, 1:123--133, 1997.

\bibitem{Zhao:IPL:2005}
Jianhua Zhao, Xuandong Li, and Guoliang Zheng.
\newblock A quadratic-time dbm-based successor algorithm for checking timed
  automata.
\newblock {\em Information Processing Letters}, 96(3):101 -- 105, 2005.

\end{thebibliography}

\end{document}